\newif\ifextended
\Crefname{example}{Example}{Examples}
\newcommand{\chngd}[1]{\setlength{\fboxrule}{1.5pt}\fcolorbox{black!25}{black!25}{#1}}
\newcommand{\chng}[1]{\setlength{\fboxrule}{1.5pt}\fcolorbox{black!25}{white}{#1}}
\newcommand{\boxchngd}[1]{\tikz[remember picture,overlay]{\node[yshift=5pt,fill=black!25,fit={(pic cs:A#1)($(pic cs:B#1)+(.75\linewidth,.8\baselineskip)$)}] {};}}
\newcommand{\boxchngdstart}[1]{\tikzmark{A#1}}
\newcommand{\boxchngdend}[1]{\tikzmark{B#1}}
\newcommand{\refalgobase}{Algorithm~\hyperref[algo:fmplex1]{1a} }
\newcommand{\refalgobounds}{Algorithm~\hyperref[algo:leaving-out-bounds]{1b} }
\newcommand{\refalgobacktracking}{Algorithm~\hyperref[algo:backtracking]{1c} }
\newtheorem{theorem}{Theorem}
\newtheorem{definition}{Definition}
\newtheorem{lemma}{Lemma}
\newtheorem{example}{Example}
\DeclareMathAlphabet{\mathcal}{OMS}{cmsy}{m}{n} %
\title{FMplex: A Novel Method for Solving\\ Linear Real Arithmetic Problems}
\author{Jasper Nalbach\thanks{Jasper Nalbach was supported by the DFG RTG 2236 UnRAVeL.}
\institute{RWTH Aachen University, Germany}
\email{nalbach@cs.rwth-aachen.de}
\and
Valentin Promies 
\institute{RWTH Aachen University, Germany}
\email{promies@cs.rwth-aachen.de}
\and
Erika Ábrahám 
\institute{RWTH Aachen University, Germany}
\email{abraham@cs.rwth-aachen.de}
\and
Paul Kobialka
\institute{University of Oslo, Norway}
\email{paulkob@ifi.uio.no}
}
\newcommand{\vect}[1]{\bm{#1}}
\newcommand{\minus}{\text{-}}
\newcommand{\rvect}[2]{\vect{#1_{#2,{\minus}}}}
\newcommand{\cvect}[2]{\vect{#1_{{\minus},#2}}}
\newcommand{\matr}[1]{\capitalisewords{#1}}
\newcommand{\sys}[2]{\matr{#1}\vect{x} \leq \vect{#2}}
\newcommand{\submatr}[2]{\matr{#1}\text{\smaller{$[#2]$}}}
\newcommand{\subvect}[2]{\vect{#1}\text{\smaller{$[#2]$}}}
\newcommand{\subsys}[3]{\submatr{#1}{#3}\vect{x} \leq \subvect{#2}{#3}}
\newcommand{\subsyseq}[3]{\submatr{#1}{#3}\vect{x} = \subvect{#2}{#3}}
\newcommand{\bnd}[2]{\textit{bnd}_{#1}(#2)}
\newcommand{\lbs}[2]{I_{#2}^{-}(\matr{#1})}
\newcommand{\ubs}[2]{I_{#2}^{+}(\matr{#1})}
\newcommand{\nbs}[2]{I_{#2}^{0}(\matr{#1})}
\newcommand{\sbs}[2]{I_{#2}^{*}(\matr{#1})}
\newcommand{\nats}{\mathbb{N}}
\newcommand{\reals}{\mathbb{R}}
\newcommand{\rationals}{\mathbb{Q}}
\newcommand{\solset}[1]{\textit{sol}(#1)}
\newcommand{\rank}[1]{\textit{rank}(#1)}
\newcommand{\fmpproj}[3]{P_{#2,#3}(#1)}
\newcommand{\proj}[2]{{#1}|_{#2}}
\newcommand{\sat}[0]{\textit{SAT}}
\newcommand{\unsat}[0]{\textit{UNSAT}}
\newcommand{\partialunsat}[0]{\textit{PARTIAL-UNSAT}}
\newcommand\fmplexelim[0]{\textnormal{\texttt{FMP}}}
\newcommand\fmelim[0]{\textnormal{\texttt{FM}}}
\newcommand{\dom}{\textit{dom}}
\newcommand{\vs}{/\!\!/}
\newif\ifproofs\proofstrue
\begin{document}
\maketitle

\begin{abstract}
In this paper we introduce a novel quantifier elimination method for conjunctions of  \emph{linear real arithmetic} constraints. Our algorithm is based on the \emph{Fourier-Motzkin variable elimination} procedure, but by case splitting we are able to reduce the worst-case complexity from doubly to singly exponential. The adaption of the procedure for SMT solving has strong correspondence to the \emph{simplex algorithm}, therefore we name it \emph{FMplex}. Besides the theoretical foundations, we provide an experimental evaluation in the context of SMT solving.
\end{abstract}
    
\section{Introduction}
\emph{Linear real arithmetic (LRA)} is a powerful first-order theory with strong practical relevance. We focus on checking the satisfiability of \emph{conjunctions} of LRA constraints, which is needed e.g. for solving quantifier-free LRA formulas using \emph{satisfiability modulo theories (SMT) solvers}.
The problem is known to be solvable in \emph{polynomial} worst-case complexity but, surprisingly, the \emph{ellipsoid} method \cite{KHACHIYAN198053} proposed in 1980 by Khachiyan is still the only available algorithm that implements this bound. However, this method is seldomly used in practice due to its high average-case effort. Instead, most approaches employ the \emph{simplex} algorithm introduced by Dantzig in 1947, which has a \emph{singly exponential} worst case complexity, but which is quite efficient in practice.
A third available solution is the \emph{Fourier-Motzkin variable elimination (FM)} method, proposed in 1827 by Fourier \cite{fourier1827analyse} and re-discovered in 1936 by Motzkin \cite{motzkin1936beitrage}. In contrast to the other two approaches, FM admits quantifier elimination, but it has a \emph{doubly exponential} worst case complexity, even though there have been various efforts to improve its efficiency by recognizing and avoiding redundant computations (e.g. \cite{imbert1993fourier,JingComplexityFME}). 

In this paper, we propose a novel method, which is derived from the FM method, but which turns out to have striking resemblance to the simplex algorithm. This yields interesting theoretical insights into the relation of the two established methods and the nature of the problem itself. More precisely, our contributions include:
\begin{itemize}
    \item The presentation of \emph{FMplex}, a new variable elimination method based on a divide-and-conquer approach. We show that it does not contain certain redundancies Fourier-Motzkin might generate and it lowers the overall complexity from \emph{doubly} to \emph{singly} exponential.
    \item An adaptation of FMplex for SMT solving, including methods to prune the search tree based on structural observations.
    \item A theorem formalizing connections between FMplex and the simplex algorithm.
    \item An implementation of the SMT adaptation and its experimental evaluation.
\end{itemize}
After recalling necessary preliminaries in \Cref{sec:preliminaries}, we introduce our novel FMplex method first for quantifier elimination in \Cref{sec:varelim} and then for SMT solving in \Cref{sec:sat}.
We present related work and compare FMplex with other methods, first qualitatively in \Cref{sec:othermethods}, and then experimentally in \Cref{sec:experiments}. We discuss future work and conclude the paper in \Cref{sec:conclusion}.

\ifextended
\else

An extended version of this paper including more detailed proofs can be found on arXiv \cite{nalbach2023fmplex}. 
\fi

\section{Preliminaries}
\label{sec:preliminaries}

Let $\reals$, $\rationals$ and $\nats$ denote the set of real, rational respectively natural ($0\notin\nats$) numbers.
For $k \in \nats$ we define $[k]:=\{1,\ldots,k\}$. Throughout this paper, we fix $n \in \nats$, a set $X = \{ x_1,\ldots,x_n \}$ and a corresponding vector $\vect{x} = (x_1,\ldots,x_n)^T$ of $\reals$-valued variables.

\vspace{-0.55em}\paragraph{Matrices}
For $m \in \nats$ let $\matr{e}^{(m)} \in \rationals^{m \times m}$ be the identity matrix, and $\vect{0}^{(m)} = (0\ \cdots\ 0)^T \in\rationals^{m\times 1}$.
The $i$th component of $\vect{f}\in \rationals^{m\times 1}\cup\rationals^{1\times m}$ is denoted by $f_i$ and the component-wise comparison to zero by $\vect{f} \geq 0$.
For $\matr{a} \in \rationals^{m \times n}$, $\rvect{a}{i} \in \rationals^{1 \times n}$ and $\cvect{a}{i} \in \rationals^{m \times 1}$ denote the $i$th row respectively column vector of $\matr{a}$.
Furthermore, $\submatr{a}{I}$ denotes the sub-matrix of $\matr{a}$ containing only the rows with indices from some $I\subseteq [m]$.
For $\vect{f} \in \rationals^{1\times m}$, $\vect{f}\matr{a}$ is a \emph{linear combination} of the rows $i \in [m]$ of $\matr{a}$ with $f_i \neq 0$.
We call $\matr{a}$ \emph{linearly independent} if none of its rows is a linear combination of its other rows, and \emph{linearly dependent} otherwise.
The \emph{rank of \matr{a}} $\rank{\matr{a}}$ is the size of a maximal $I \subseteq [m]$ with $\submatr{a}{I}$ linearly independent.

\vspace{-0.55em}\paragraph{Linear Constraints}
Let $\vect{a} = (a_1, \ldots, a_n) \in \rationals^{1\times n}$, $b \in \rationals$ and $\sim \in \{ =,\leq,<,\neq \}$ a \emph{relation symbol}.
We call $\vect{a}\vect{x}$ a \emph{linear term} and $\vect{a}\vect{x} \sim b$ a \emph{linear constraint}, which is \emph{weak} if $\sim \in \{=,\leq\}$  and \emph{strict} otherwise.
A \emph{system of linear constraints}, or short a \emph{system}, is a non-empty finite set of linear constraints.
For most of this paper, we only consider constraints of the form $\vect{a}\vect{x} \leq b$.
We can write every system $C=\{\rvect{a}{i}\;\vect{x} \leq b_i \mid i \in [m]\}$ of such constraints in \emph{matrix representation} $\sys{a}{b}$ with suitable $\matr{A} \in \rationals^{m \times n}$ and $\vect{b} \in \rationals^{m\times 1}$.
Conversely, every row $\rvect{a}{i}\; \vect{x} \leq b_i, \ i \in [m]$ of $\sys{A}{b}$ is a linear constraint.
Thus, the representations are mostly interchangeable; however, the matrix representation allows redundant rows in contrast to the set notation.
As the latter will play a role later on, we will stick to the matrix representation.

\vspace{-0.55em}\paragraph{Variable Assignment}
An \emph{assignment} is a function $\alpha: Y \to \reals$ with domain $\dom(\alpha)=Y\subseteq X$.
The \emph{extension} $\alpha[x_i \mapsto r]$ is the assignment with domain $\dom(\alpha)\cup\{x_i\}$ such that $\alpha[x_i \mapsto r](x_j)=\alpha(x_j)$ for all $x_j\in \dom(\alpha)\setminus\{x_i\}$ and $\alpha[x_i \mapsto r](x_i)=r$.
For $Z\subseteq Y$, the \emph{restriction} $\proj{\alpha}{Z}$ is the assignment with domain $Z$ such that $\proj{\alpha}{Z}(x_i)=\alpha(x_i)$ for all $x_i\in Z$.
We extend these notations to sets of assignments accordingly.

The standard \emph{evaluation} of a linear term $t$ under $\alpha$ is written $\alpha(t)$.
We say that $\alpha$ \emph{satisfies} (or is a solution of) a constraint $c = (\vect{a}\vect{x} \sim b)$ if $\alpha(a_1 x_1 + \ldots a_n x_n) \sim b$ holds, and denote this fact by $\alpha \models c$.
All solutions of $c$ build its \emph{solution set} $\solset{c}$.
Similarly, $\alpha\models (\matr{a} \vect{x} \leq \vect{b})$ denotes that $\alpha$ is a common solution of all linear constraints in the system $\matr{a} \vect{x} \leq \vect{b}$.
A system is \emph{satisfiable} if it has a common solution, and \emph{unsatisfiable} otherwise.
 Note that each satisfiable system has also a rational-valued solution.

\vspace*{0.4em}
\noindent We will also make use of the following two well-known results.

\begin{theorem}[Farkas' Lemma \cite{farkas1902theorie}] \label{thm:farkas}
    Let $\matr{a} \in \rationals^{m \times n}$ and $\vect{b} \in \rationals^{m\times 1}$.
	Then the system $\matr{a} \vect{x} \leq \vect{b}$ is satisfiable if and only if for all $\vect{f} \in \rationals^{1\times m}$ with $\vect{f} \geq 0$ and $\vect{f} \matr{a} = (0,\ldots,0) \in \rationals^{1\times n}$ it holds $\vect{f} \vect{b} \geq 0$.
\end{theorem}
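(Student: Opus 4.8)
The plan is to prove the two implications separately; the forward one is immediate and the converse carries all the content. For the forward direction, suppose $\matr{a}\vect{r} \leq \vect{b}$ for some $\vect{r} \in \reals^{n\times 1}$, and let $\vect{f} \in \rationals^{1\times m}$ with $\vect{f} \geq 0$ and $\vect{f}\matr{a} = (0,\ldots,0)$. Scaling the $i$-th constraint $\rvect{a}{i}\vect{r} \leq b_i$ by $f_i \geq 0$ and summing over $i$ preserves the inequality and yields $(\vect{f}\matr{a})\vect{r} \leq \vect{f}\vect{b}$; the left-hand side equals $0$, so $\vect{f}\vect{b} \geq 0$.

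For the converse I would argue the contrapositive: if $\matr{a}\vect{x} \leq \vect{b}$ is unsatisfiable, then there exists $\vect{f} \in \rationals^{1\times m}$ with $\vect{f} \geq 0$, $\vect{f}\matr{a} = (0,\ldots,0)$ and $\vect{f}\vect{b} < 0$. I would prove this by induction on the number $n$ of variables, using a single Fourier-Motzkin elimination step --- a natural choice here, since FMplex is built on exactly this operation. For $n = 0$, every row of the system reads $0 \leq b_i$, so unsatisfiability means $b_i < 0$ for some $i$, and the $i$-th unit row vector is the desired (rational) certificate. For the inductive step, eliminating $x_n$ replaces the system by one of the form $\matr{F}\matr{a}\vect{x} \leq \matr{F}\vect{b}$ with $\matr{F} \in \rationals^{m'\times m}$, $\matr{F} \geq 0$, and $\matr{F}\cvect{a}{n} = \vect{0}$ --- that is, a system not constraining $x_n$ and hence effectively one over $x_1,\ldots,x_{n-1}$ --- which, crucially, is satisfiable if and only if the original one is. Since it is therefore unsatisfiable, the induction hypothesis (for the system over $x_1,\ldots,x_{n-1}$) gives $\vect{h} \in \rationals^{1\times m'}$ with $\vect{h} \geq 0$, $\vect{h}(\matr{F}\matr{a}) = (0,\ldots,0) \in \rationals^{1\times n}$ (using that the $x_n$-column of $\matr{F}\matr{a}$ is already zero) and $\vect{h}(\matr{F}\vect{b}) < 0$. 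Then $\vect{f} := \vect{h}\matr{F}$ satisfies $\vect{f} \geq 0$, $\vect{f}\matr{a} = (0,\ldots,0)$, and $\vect{f}\vect{b} < 0$, and is rational, as required.

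The main obstacle is the hinge of the inductive step: showing that one Fourier-Motzkin elimination step produces an equisatisfiable system each of whose constraints is a nonnegative rational combination of the original constraints. This needs the case analysis on the sign of the $x_n$-coefficient in each row, the treatment of the degenerate cases where $x_n$ has only lower bounds or only upper bounds (where those rows are simply discarded and $x_n$ may be chosen freely), and the short argument that any solution of the projected system extends to one of the original by choosing $x_n$ between the induced lower and upper bounds; everything else is bookkeeping of the coefficient matrix $\matr{F}$. (The statement could alternatively be obtained from linear programming duality or from the projection theorem for polyhedra, but the Fourier-Motzkin argument is self-contained and fits the setting of this paper.)
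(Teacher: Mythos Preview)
The paper does not prove this theorem. Farkas' Lemma is stated in the preliminaries as one of two ``well-known results'' and is simply cited to \cite{farkas1902theorie}; no argument is given. So there is no paper proof to compare against.

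That said, your proposal is a correct and self-contained proof. The forward direction is the standard one-line computation. For the converse, the induction on $n$ via a single Fourier--Motzkin step is a classical route to Farkas' Lemma and is particularly apt in this paper's context: the key facts you rely on---that the FM transformation matrix $\matr{F}$ has nonnegative rational entries, that $\matr{F}\cvect{a}{n} = \vect{0}$, and that $\matr{F}\matr{a}\vect{x} \leq \matr{F}\vect{b}$ is equisatisfiable with $\sys{a}{b}$---are exactly the properties the paper itself records when introducing FM in \Cref{sec:preliminaries}. Your handling of the base case and of the degenerate cases (no lower or no upper bounds) is adequate; note that if the eliminated system happens to be empty it is vacuously satisfiable, contradicting the hypothesis, so the induction never stalls there. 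One small point worth making explicit: the induction hypothesis is applied to the system $\matr{F}\matr{a}\vect{x} \leq \matr{F}\vect{b}$ viewed over $x_1,\ldots,x_{n-1}$, and you correctly observe that the resulting identity $\vect{h}(\matr{F}\matr{a}) = 0$ automatically extends to all $n$ columns because the $n$-th column of $\matr{F}\matr{a}$ is already zero.
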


\begin{theorem}[Fundamental Theorem of Linear Programming, as in \cite{luenberger1984linear}]\label{thm:linearprogramming}
	Let $\matr{a} \in \rationals^{m \times n}$ and $\vect{b} \in \rationals^{m \times 1}$.
	Then $\sys{a}{b}$ is satisfiable if and only if there exists a subset $I \subseteq [m]$ such that $\submatr{a}{I}$ is linearly independent, $\abs{I} = \rank{\matr{a}}$, and there exists an assignment $\alpha: X \to \reals$ with $\alpha \models (\submatr{a}{I}\vect{x} = \subvect{b}{I})$ and $\alpha \models (\sys{a}{b})$.
\end{theorem}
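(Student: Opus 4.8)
The backward direction needs no work: the right-hand side explicitly asserts the existence of an assignment $\alpha$ with $\alpha \models (\sys{a}{b})$, which is exactly what satisfiability means, so I would dispose of it in one line.

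For the forward direction the plan is to start from an arbitrary solution and successively ``slide'' it onto more and more of the constraint hyperplanes until it sits on a linearly independent family of them of size $\rank{\matr{a}}$ --- that is, to manufacture a basic feasible solution. Identifying assignments $\alpha: X \to \reals$ with points of $\reals^n$, for a solution $\vect{v}$ of $\sys{a}{b}$ write $I(\vect{v}) = \{ i \in [m] \mid \rvect{a}{i}\,\vect{v} = b_i \}$ for its set of tight rows. The core step is the claim: \emph{if $\rank{\submatr{a}{I(\vect{v})}} < \rank{\matr{a}}$, then there is a solution $\vect{v}'$ with $\rank{\submatr{a}{I(\vect{v}')}} > \rank{\submatr{a}{I(\vect{v})}}$.} Granting the claim, I would apply it at most $\rank{\matr{a}}$ times, starting from any solution, to reach a solution $\vect{v}^*$ with $\rank{\submatr{a}{I(\vect{v}^*)}} = \rank{\matr{a}}$; then pick $I \subseteq I(\vect{v}^*)$ indexing a maximal linearly independent subfamily of the tight rows, so that $\abs{I} = \rank{\matr{a}}$, $\submatr{a}{I}$ is linearly independent, and $\submatr{a}{I}\,\vect{v}^* = \subvect{b}{I}$ since all of those rows are tight at $\vect{v}^*$; the assignment corresponding to $\vect{v}^*$ is then the required witness.

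To establish the claim I would argue as follows. Since $\rank{\submatr{a}{I(\vect{v})}} < \rank{\matr{a}}$, some row $\rvect{a}{j}$ of $\matr{a}$ lies outside the row space of $\submatr{a}{I(\vect{v})}$; as the row space of a matrix is the orthogonal complement of its kernel, there is a direction $\vect{d} \in \reals^n$ with $\submatr{a}{I(\vect{v})}\,\vect{d} = \vect{0}$ and $\rvect{a}{j}\,\vect{d} \neq 0$ (in particular $j \notin I(\vect{v})$). Along the whole line $t \mapsto \vect{v} + t\vect{d}$ every row tight at $\vect{v}$ stays tight. The remaining rows split according to the sign of $\rvect{a}{i}\,\vect{d}$, and at least one of the two sides is non-empty because it contains $j$; moving in the matching direction $\pm\vect{d}$, the corresponding values $\rvect{a}{i}(\vect{v} \pm t\vect{d})$ grow strictly from below their bounds $b_i$, hence each crosses its bound, and at the first crossing time $t^* > 0$ the point $\vect{v}' = \vect{v} \pm t^*\vect{d}$ is still a solution, keeps all old tight rows, and has some new row $i_0$ tight. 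Because $\rvect{a}{i_0}\,\vect{d} \neq 0$ while every row of $\submatr{a}{I(\vect{v})}$ is orthogonal to $\vect{d}$, the row $\rvect{a}{i_0}$ is independent of $\submatr{a}{I(\vect{v})}$, so $\rank{\submatr{a}{I(\vect{v}')}} \geq \rank{\submatr{a}{I(\vect{v})}} + 1$.

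I expect the only real obstacle to be the case in which the feasible region is unbounded --- or even contains a line --- in the direction $\vect{d}$: then sliding forward never reaches a new hyperplane, which is why one must also be prepared to slide along $-\vect{d}$ and observe that row $j$ forces genuine progress in at least one of the two directions. The remaining bookkeeping --- checking that no previously satisfied constraint (in particular no previously tight one) is violated before time $t^*$ --- is routine, following from minimality of $t^*$ and from $\submatr{a}{I(\vect{v})}\,\vect{d} = \vect{0}$.
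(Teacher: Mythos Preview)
The paper does not actually prove this theorem: it is stated in the preliminaries as a classical result and attributed to \cite{luenberger1984linear}, with no argument given. So there is no ``paper's own proof'' to compare against.

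That said, your sketch is the standard construction of a basic feasible solution and is correct. The key points all check out: choosing $\vect{d}$ in the kernel of the tight rows but not of all rows is exactly the right use of the rank gap; moving along $\vect{v} + t\vect{d}$ keeps tight rows tight and can only make the non-tight rows with the ``wrong'' sign more slack, so feasibility is preserved up to and including the first crossing time $t^*$; and the newly tight row has $\rvect{a}{i_0}\,\vect{d} \neq 0$, so it is genuinely independent of the previously tight rows. Your handling of the unbounded case---falling back to $-\vect{d}$ and noting that row $j$ forces progress in at least one direction---is precisely the needed observation. The only cosmetic remark is that you do not need the new tight row to be $j$ itself; any row with nonzero $\rvect{a}{i}\,\vect{d}$ that becomes tight first will do, and your argument already reflects this.
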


\subsection{Fourier-Motzkin Variable Elimination}
The \emph{Fourier-Motzkin variable elimination} (FM) \cite{fourier1827analyse,motzkin1936beitrage} method allows to eliminate any $x_j \in X$ from a system $\sys{a}{b}$ by computing $\sys{a'}{b'}$ with $\cvect{a'}{j}=0$ and such that an assignment $\alpha$ is a solution of $\sys{a'}{b'}$ if and only if there is $r\in\rationals$ so that $\alpha[x_j\mapsto r]$ is a solution of $\sys{a}{b}$.
Graphically, the solution set of $\sys{a'}{b'}$ is the projection of the solutions of $\sys{a}{b}$ onto $X \setminus \{ x_j \}$.

The idea of the FM method is as follows.
For each $i \in [m]$ with $a_{i,j} \neq 0$, the constraint $\rvect{a}{i}\;\vect{x} \leq b_i$ can be rewritten as either a \emph{lower bound} or an \emph{upper bound} on $x_j$, denoted in both cases as $\bnd{j}{\rvect{a}{i}\;\vect{x} \leq b_i}$:
\[ \big(\sum_{k \in [n] \setminus \{j\}} -\frac{a_{i,k}}{a_{i,j}} \cdot x_k\big) +\frac{b_i}{a_{i,j}} \leq x_j,\ \, \text{ if } a_{i,j}<0,\qquad \text{ resp. }\qquad x_j \leq \big(\sum_{k \in [n] \setminus \{j\}} -\frac{a_{i,k}}{a_{i,j}} \cdot x_k\big) +\frac{b_i}{a_{i,j}},\ \,\text{ if $a_{i,j}>0$}.\]

\begin{definition}
	For $\matr{a} \in \rationals^{m \times n}$, we define the index sets
    \[\lbs{a}{j} := \{i \in [m] \mid a_{i,j} < 0\},\quad
    \ubs{a}{j} := \{i \in [m] \mid a_{i,j} > 0\},\quad and \quad \nbs{a}{j} := \{i \in [m] \mid a_{i,j} = 0\}.\]
\end{definition}
\noindent $\lbs{a}{j}$, $\ubs{a}{j}$ and$\nbs{a}{j}$ indicate the rows of $\sys{a}{b}$ which induce lower bounds, upper bounds and no bounds on $x_j$, respectively.
Due to the density of the reals, there exists a value for $x_j$ that satisfies all bounds if and only if each lower bound is less than or equal to each upper bound.
However, since in general the involved bounds are symbolic and thus their values depend on the values of other variables, we cannot directly check this condition.
To express this, we let $\sys{a'}{b'}$ be defined by the constraint set
\[
    \{\bnd{j}{\rvect{a}{\ell}\;\vect{x} \leq b_{\ell}} \leq \bnd{j}{\rvect{a}{u}\;\vect{x} \leq b_{u}} \mid (\ell,u) \in \lbs{a}{j}\times\ubs{a}{j}\}\quad \cup\quad \{\rvect{a}{i}\;\vect{x} \leq b_i \mid i \in \nbs{a}{j}\}.
\]

\noindent In matrix representation, the FM method applies the following transformation:

\begin{definition}[Fourier-Motzkin Variable Elimination]
	Let $\matr{a} \in \rationals^{m \times n}$, $\vect{b} \in \rationals^{m\times 1}$, and $j \in [n]$.
	Let further
    $m'=|\lbs{a}{j}|\cdot |\ubs{a}{j}|+|\nbs{a}{j}|$ and $\matr{f} \in \rationals^{m' \times m}$ be a matrix consisting of exactly the following rows:\footnote{Remember that we use lower case letters for rows of matrices with the respective upper case letter as name. Thus, $\rvect{e^{(m)}}{i}$ denotes the $i$th column vector of the identity matrix $\matr{E^{(m)}}$.}
    \begin{align*}
        -\frac{1}{a_{\ell,j}} \cdot \rvect{e^{(m)}}{\ell} + \frac{1}{a_{u,j}} \cdot \rvect{e^{(m)}}{u}\ \text{ for every pair }\ (\ell,u) \in \lbs{a}{j}\times\ubs{a}{j}\ \qquad \text{ and }\ \qquad 
        \rvect{e^{(m)}}{i}\ \text{ for every }\ i \in \nbs{a}{j}.
    \end{align*}

	\noindent Then the \emph{Fourier-Motzkin variable elimination} $\fmelim_{j}(\sys{a}{b})$ of $x_j$ from the system $\sys{a}{b}$ is defined as the system $\matr{f}\matr{a}\vect{x}\leq \matr{f} \vect{b}$.
\end{definition}

The consistency of $\sys{a}{b}$ can be checked by successively eliminating variables $x_n, \ldots, x_1$, obtaining intermediate systems $\sys{a^{(n-1)}}{b^{(n-1)}}, \ldots, \sys{a^{(0)}}{b^{(0)}}$.
All entries of the transformation matrix $\matr{f}$ in the definition above are positive, and thus for any $k \in \{0,\ldots,n-1\}$ and any row $i'$ in $\sys{a^{(k)}}{b^{(k)}}$,
there exists $0\leq \vect{f} \in \rationals^{m\times 1}$ s.t. $\vect{f}\matr{A} = \rvect{a^{(k)}}{i'}$ and $\vect{f}\vect{b} = b^{(k)}_{i'}$, or in short:
$\sum_{i \in [m]} f_{i} \cdot (\rvect{a}{i}\;\vect{x} \leq b_i) = (\rvect{a^{(k)}}{i'}\vect{x} \leq b^{(k)}_{i'})$.
We call this kind of linear combinations \emph{conical combinations}.
By Farkas' Lemma (\Cref{thm:farkas}), if $\sys{a^{(0)}}{b^{(0)}}$ is unsatisfiable, then so is $\sys{a}{b}$.
If it is satisfiable, then it is satisfied by the empty assignment, which can be extended successively to a model of $\sys{a^{(1)}}{b^{(1)}},\ldots, \sys{a^{(n-1)}}{b^{(n-1)}}$ and $\sys{a}{b}$.

A major drawback of the Fourier-Motzkin variable elimination is its doubly exponential complexity in time and space w.r.t. the number of eliminated variables.
Moreover, many of the generated rows are redundant because they are linear combinations of the other rows, i.e. they could be omitted without changing the solution set of the system.
Redundancies might already be contained in the input system, or they arise during the projection operation.
While removing all redundancies is expensive, there are efficient methods for removing some redundancies of the latter type, for example Imbert's acceleration theorems \cite{imbert1990redundant,imbert1993fourier,JingComplexityFME}.

\begin{lemma}[Redundancy by Construction]\label{def:redundancyconstruction}
	Let $\matr{A} \in \rationals^{m \times n}, \vect{b} \in \rationals^{m\times 1}$ and $\matr{F} \in \rationals^{m' \times m}$.
Let furthermore $\matr{A'}=\matr{F}\matr{A}$, $\vect{b'}=\matr{F}\vect{b}$ and $i \in [m']$.
    If there exists $\vect{r} \in \rationals^{1\times m'}$ with $\vect{r} \geq 0$, $r_i = 0$ and $\vect{r}\matr{F} = \rvect{f}{i}$ (i.e. the $i$th row of $\sys{a'}{b'}$ is a conical combination $\vect{r}\matr{F}\matr{A}\vect{x} \leq \vect{r}\matr{F}\vect{b}$ of the other rows), then that row is redundant in $\sys{a'}{b'}$, i.e. the solution set does not change when omitting it: $\solset{\sys{a'}{b'}} = \solset{\subsys{a'}{b'}{[m'] \setminus \{ i \}}}$.
\end{lemma}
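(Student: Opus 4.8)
The plan is the standard two-inclusion argument for equality of solution sets. One inclusion, $\solset{\sys{a'}{b'}} \subseteq \solset{\subsys{a'}{b'}{[m'] \setminus \{ i \}}}$, is immediate, since dropping a constraint can only enlarge the solution set. For the reverse inclusion it suffices to show that every assignment $\alpha$ with $\alpha \models \subsys{a'}{b'}{[m'] \setminus \{ i \}}$ already satisfies the $i$th constraint $\rvect{a'}{i}\;\vect{x} \leq b'_i$ as well, because then $\alpha \models \sys{a'}{b'}$, giving the missing inclusion.

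The key step is to rewrite the $i$th row of $\sys{a'}{b'}$ as a conical combination of the remaining rows. From $\matr{A'} = \matr{F}\matr{A}$, $\vect{b'} = \matr{F}\vect{b}$ and $\vect{r}\matr{F} = \rvect{f}{i}$ I obtain $\rvect{a'}{i} = \rvect{f}{i}\matr{A} = (\vect{r}\matr{F})\matr{A} = \vect{r}\matr{A'}$ and, analogously, $b'_i = \vect{r}\vect{b'}$. Since $r_i = 0$ and $\vect{r} \geq 0$, the combinations $\vect{r}\matr{A'} = \sum_{k \in [m'] \setminus \{i\}} r_k\,\rvect{a'}{k}$ and $\vect{r}\vect{b'} = \sum_{k \in [m'] \setminus \{i\}} r_k\, b'_k$ involve only rows $k \neq i$, with nonnegative coefficients $r_k$; this is precisely where the hypothesis $r_i = 0$ keeps the argument non-circular.

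It then remains to observe that a nonnegative linear combination of satisfied weak inequalities is again satisfied. Fix $\alpha$ as above. For each $k \in [m'] \setminus \{i\}$ we have $\alpha(\rvect{a'}{k}\;\vect{x}) \leq b'_k$, hence $r_k \cdot \alpha(\rvect{a'}{k}\;\vect{x}) \leq r_k \cdot b'_k$ because $r_k \geq 0$. Summing over all such $k$ and using linearity of evaluation gives $\alpha(\rvect{a'}{i}\;\vect{x}) = \alpha\big((\vect{r}\matr{A'})\vect{x}\big) = \sum_{k \in [m'] \setminus \{i\}} r_k \cdot \alpha(\rvect{a'}{k}\;\vect{x}) \leq \sum_{k \in [m'] \setminus \{i\}} r_k \cdot b'_k = \vect{r}\vect{b'} = b'_i$, i.e. $\alpha \models (\rvect{a'}{i}\;\vect{x} \leq b'_i)$, which is what was needed.

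I do not expect a real obstacle: the only care needed is bookkeeping with the row/column notation of $\matr{F}$, $\matr{A'}$ and $\vect{b'}$, and tracking that $r_i = 0$ is exactly what turns $\vect{r}\matr{A'}$ into a combination of the \emph{other} rows, so that redundancy of row $i$ is established without appealing to row $i$ itself.
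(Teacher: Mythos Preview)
Your proof is correct. The paper states this lemma without proof (the extended version with detailed proofs is referenced but not included here), so there is no paper proof to compare against; your two-inclusion argument, using $\vect{r}\matr{F} = \rvect{f}{i}$ to rewrite row $i$ as a nonnegative combination of the remaining rows of $\sys{a'}{b'}$ and then summing the satisfied inequalities, is exactly the standard and expected justification.
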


\section{FMplex as Variable Elimination Procedure}
\label{sec:varelim}

The FM method encodes that none of the lower bounds on some variable $x_j$ in a system $\sys{a}{b}$ is larger than any of its upper bounds.
In our \emph{FMplex} method, instead of considering all lower-upper bound combinations at once, we \emph{split the problem into a set of sub-problems} by case distinction either on \emph{which of the lower bounds is the largest} or alternatively on \emph{which of the upper bounds is the smallest}.
For splitting on lower bounds, for each lower bound on $x_j$ we consider solutions where this lower bound is maximal under all lower bounds, and at the same time not larger than any of the upper bounds.
The upper bound case is analogous.
Then $\sys{a}{b}$ is satisfiable if and only if there exists a solution in one of these sub-problems.
Asymptotically, these sub-problems are significantly smaller than the systems produced by FM, so that in total our approach produces \emph{at most exponentially} many constraints after iterated application, in contrast to the doubly exponential effort of the FM method.

Formally, if there are no upper or no lower bounds on $x_j$, then there is no need for case  splitting and we follow FM using $\exists x_j.\; \sys{a}{b} \equiv \subsys{a}{b}{\nbs{a}{j}}$.
Otherwise, for the sub-problem when designating $i \in \lbs{a}{j}$ as largest lower bound, we encode that no other lower bound is larger than the bound induced by row $i$, and no upper bound is below this bound.
Using the set notation for systems, we obtain
\begin{align*}
	&\{ \bnd{j}{\rvect{a}{i'}\;\vect{x} \leq b_{i'}} \leq \bnd{j}{\rvect{a}{i}\;\vect{x} \leq b_i} \mid i' \in \lbs{a}{j},\; i' \neq i\} \\
	\cup & \{\bnd{j}{\rvect{a}{i}\;\vect{x} \leq b_i} \leq \bnd{j}{\rvect{a}{i'}\;\vect{x} \leq b_{i'}} \mid i' \in \ubs{a}{j}\} %
	\cup  \{\rvect{a}{i'}\;\vect{x} \leq b_{i'} \mid i' \in \nbs{a}{j}\}.
\end{align*}

\begin{example} \label{example:fmplex-idea}
	We eliminate $x_2$ from the system $\sys{a}{b}$ consisting of the lower-bounding constraints $c_1$ and $c_2$, and the upper-bounding $c_3$ and $c_4$, specified below along with a graphical depiction.
	The lower bounds $\lbs{a}{2} = \{1,2\}$ on $x_2$ are blue, the upper bounds $\ubs{a}{2} = \{3,4\}$ are green. The solution set is the gray area and the dashed line indicates the split into two sub-problems, namely the cases that $c_1$ resp. $c_2$ is a largest lower bound on $x_2$ and not larger than any upper bound on $x_2$.

\begin{figure}[h]
	\begin{subfigure}[h]{0.48\textwidth}
		\centering
		\begin{align*}
			\begin{blockarray}{ccc}
			\begin{block}{c[cc]}
				c_1 & -1 & -1 \bigstrut[t] \\
				c_2 & 0 & -2 \\
				c_3 & -2 & 1 \\
				c_4 & 0 & 1 \bigstrut[b]\\
			\end{block}
			\end{blockarray}
			\cdot
			\left[ {\begin{array}{ccc}
				x_1 \\ x_2
			\end{array} } \right]
			\leq
			\left[ {\begin{array}{ccc}
				-4 \\ -2 \\ 1 \\ 5
			\end{array} } \right]
		\end{align*}
	\end{subfigure}\hfill
	\begin{subfigure}[h]{0.48\textwidth}
		\centering
		\begin{tikzpicture}[scale=0.5,
		lb/.style={
			very thick, color=blue100, opacity=1, shorten <= -5pt, shorten >= -5pt,
			postaction={draw,decorate,decoration={markings,mark=at position .4 with {\draw[->, shorten <= 0, shorten >= 0] (0,0) -- (0,-8pt);}}}}, 
		ub/.style={
			very thick, color=green100, opacity=1, shorten <= -5pt, shorten >= -5pt,
			postaction={draw,decorate,decoration={markings,mark=at position .6 with {\draw[->, shorten <= 0, shorten >= 0] (0,0) -- (0,8pt);}}}},
		]
		\draw[->] (0,0) -- (6.5,0) node[right] {$x_1$};
		\draw[->] (0,0) -- (0,6) node[below left] {$x_2$};
		\path[name path = c1] (0,4) -- (4,0);
		\path[name path = c2] (0,1) -- (6,1);
		\path[name path = c3] (0,1) -- (2.5,6);
		\path[name path = c4] (0,5) -- (6,5);
		\path[name path = c5]  (6,0) -- (6,6);
		\path [name intersections={of=c1 and c3, by={p13}}];
		\path [name intersections={of=c3 and c4, by={p34}}];
		\path [name intersections={of=c4 and c5, by={p45}}];
		\path [name intersections={of=c2 and c5, by={p25}}];
		\path [name intersections={of=c1 and c2, by={p12}}];
		\fill[color=gray!25] (p13) -- (p34) -- (p45) -- (p25) -- (p12) -- cycle;
		\draw[lb] (0,4) -- (4,0) node[left, pos=0, xshift=1pt] {$c_1$};
		\draw[lb] (0,1) -- (6,1) node[above right, yshift=-2pt] {$c_2$};
	
		\draw[ub] (0,1) -- (2.25,5.5) node[left, xshift=2pt, yshift=1pt] {$c_3$};
		\draw[ub] (0,5) -- (6,5) node[below right] {$c_4$};
		
		\draw[color=black75, thick, dashed] (3,0) -- (3,6);
		\end{tikzpicture}
	\end{subfigure}	
\end{figure}

    The encoding of the $c_1$-case is given by $(\bnd{2}{c_2}\leq \bnd{2}{c_1})\land (\bnd{2}{c_1} \leq \bnd{2}{c_3}) \land (\bnd{2}{c_1} \leq \bnd{2}{c_4})$, which evaluates to $(x_1 \leq 3) \land (-3x_1 \leq -3) \land (-x_1 \leq 1)$ satisfied by any $x_1 \in [1,3]$, on the left of the dashed line.
    The case for $c_2$ evaluates to $(-x_1 \leq -3) \land (-2x_1 \leq 0) \land (0 \leq 4)$ and is satisfiable on the right of the dashed line. The disjunction of the two formulas then defines exactly those values for $x_1$ which allow a solution of the initial system.
\end{example}

\noindent The construction for the case $i \in \ubs{a}{j}$ designating $i$ as smallest upper bound is analogous.
In matrix representation, these projections are defined by the following transformation:

\begin{definition}[Restricted Projection] \label{def:restricted-projection}
	Let $\matr{A} \in \rationals^{m \times n}$, $\vect{b} \in \rationals^{m\times 1}$ and $j \in [n]$.
	\begin{itemize}
		\item If $\lbs{a}{j}\not=\emptyset$ and $\ubs{a}{j}\not=\emptyset$, then for any $i \in \lbs{a}{j} \cup \ubs{a}{j}$ we fix $\matr{f} \in \rationals^{(m-1) \times m}$ arbitrarily but deterministically to consist of exactly the following rows:
		\begin{align*}
			\frac{1}{a_{i,j}} \cdot \rvect{e^{(m)}}{i} - \frac{1}{a_{i',j}} \cdot \rvect{e^{(m)}}{i'} &\text{ for every } i' \in \lbs{a}{j} \setminus \{i\},\\
			-\frac{1}{a_{i,j}} \cdot \rvect{e^{(m)}}{i} + \frac{1}{a_{i',j}} \cdot \rvect{e^{(m)}}{i'} &\text{ for every } i' \in \ubs{a}{j} \setminus \{i\},\qquad \text{ and }\qquad \rvect{e^{(m)}}{i'} \text{ for every } i' \in \nbs{a}{j}.
		\end{align*} 
		Then the \emph{restricted projection} $\fmpproj{\sys{a}{b}}{j}{i}$ of $x_j$ w.r.t. the row $i$ from the system $\sys{a}{b}$ is defined as the system $\matr{f}\matr{a}\vect{x}\leq \matr{f} \vect{b}$.
		We call $\matr{f}$ the \emph{projection matrix} corresponding to $\fmpproj{\sys{a}{b}}{j}{i}$.
		\item If $\lbs{a}{j}=\emptyset$ or $\ubs{a}{j}=\emptyset$, then we define the projection matrix $\matr{f} \in \rationals^{|\nbs{a}{j}| \times m}$ to have exactly one row $\rvect{e^{(m)}}{i'}$ for each $i' \in \nbs{a}{j}$, and define $\fmpproj{\sys{a}{b}}{j}{\bot}$ as $\matr{f}\matr{a}\vect{x}\leq \matr{f} \vect{b}$.
	\end{itemize}
\end{definition}

\noindent The following lemma states a crucial result for our method: The solutions of the restricted projections for all lower (or all upper) bounds of a variable exactly cover the projection of the entire solution set.

\begin{lemma} \label{lma:restricted-projection}
	Let $\matr{A} \in \rationals^{m \times n}$, $\vect{b} \in \rationals^{m\times 1}$, $j \in [n]$ and $I \in \{ \lbs{a}{j}, \ubs{a}{j}\}$.
	If $\lbs{a}{j}\neq \emptyset$ and $\ubs{a}{j}\neq \emptyset$, then
	\[
		\proj{\solset{\sys{a}{b}}}{X\setminus\{x_j\}} = \bigcup_{i \in I} \solset{\fmpproj{\sys{a}{b}}{j}{i}}.
	\]
	Otherwise ($\lbs{a}{j}=\emptyset$ or $\ubs{a}{j}=\emptyset$), it holds $\proj{\solset{\sys{a}{b}}}{X\setminus\{x_j\}} = \solset{\fmpproj{\sys{a}{b}}{j}{\bot}}$.
\end{lemma}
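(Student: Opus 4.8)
The plan is to prove the two set equalities by mutual inclusion, reducing everything to the characterisation of Fourier–Motzkin via conical combinations together with the elementary observation that a finite set of lower bounds on $x_j$ has a maximum and a finite set of upper bounds has a minimum. First consider the degenerate case: if $\ubs{a}{j}=\emptyset$ (the case $\lbs{a}{j}=\emptyset$ is symmetric), then $\fmpproj{\sys{a}{b}}{j}{\bot}$ is exactly $\subsys{a}{b}{\nbs{a}{j}}$. Given $\alpha\models\subsys{a}{b}{\nbs{a}{j}}$, every constraint involving $x_j$ is a lower bound, so setting $r$ to the maximum of the finitely many values $\bnd{j}{\rvect{a}{i}\;\vect{x}\leq b_i}$ evaluated at $\alpha$ (or anything, if there are none) gives $\alpha[x_j\mapsto r]\models\sys{a}{b}$; conversely any solution of $\sys{a}{b}$ restricted to $X\setminus\{x_j\}$ satisfies the rows in $\nbs{a}{j}$ since those do not mention $x_j$. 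This is the same argument as for standard FM and should be dispatched quickly.

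For the main case, fix $I=\lbs{a}{j}$ (the case $I=\ubs{a}{j}$ is entirely symmetric, swapping the roles of lower and upper bounds). For the inclusion $\supseteq$: each row of the projection matrix $\matr{f}$ defining $\fmpproj{\sys{a}{b}}{j}{i}$ is a conical combination of rows of $\matr{A}$ — it is either a single $\nbs{a}{j}$-row, or a positive combination of row $i$ with another lower- or upper-bound row, using that $-1/a_{i,j}>0$, $-1/a_{i',j}>0$ for $i'\in\lbs{a}{j}$, and $1/a_{i',j}>0$ for $i'\in\ubs{a}{j}$. Hence by the conical-combination argument preceding Farkas' Lemma (or directly: every solution of $\sys{a}{b}$ satisfies any conical combination of its rows), any $\alpha[x_j\mapsto r]\models\sys{a}{b}$ yields $\alpha\models\fmpproj{\sys{a}{b}}{j}{i}$ for every $i$, and restricting to $X\setminus\{x_j\}$ shows $\solset{\fmpproj{\sys{a}{b}}{j}{i}}\supseteq\proj{\solset{\sys{a}{b}}}{X\setminus\{x_j\}}$, so the union is too. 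Note this inclusion holds for each single $i$, which is stronger than needed.

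For the inclusion $\subseteq$, take $\alpha$ with domain $X\setminus\{x_j\}$ satisfying $\fmpproj{\sys{a}{b}}{j}{i}$ for some $i\in\lbs{a}{j}$. I must produce $r$ with $\alpha[x_j\mapsto r]\models\sys{a}{b}$. The natural choice is $r:=\bnd{j}{\rvect{a}{i}\;\vect{x}\leq b_i}$ evaluated at $\alpha$, i.e. the lower bound designated as largest. Then: (i) the $\nbs{a}{j}$-rows are satisfied because they appear verbatim in $\fmpproj{\sys{a}{b}}{j}{i}$; (ii) for $i'\in\lbs{a}{j}\setminus\{i\}$, the row $\tfrac{1}{a_{i,j}}\rvect{e^{(m)}}{i}-\tfrac{1}{a_{i',j}}\rvect{e^{(m)}}{i'}$ of the projection, when spelled out, is exactly the constraint $\bnd{j}{\rvect{a}{i'}\;\vect{x}\leq b_{i'}}\leq\bnd{j}{\rvect{a}{i}\;\vect{x}\leq b_i}=r$, which combined with $a_{i',j}<0$ gives $\rvect{a}{i'}\;(\alpha[x_j\mapsto r])\leq b_{i'}$; the designated row $i$ itself is satisfied with equality by the choice of $r$; (iii) for $i'\in\ubs{a}{j}$, the projection row gives $r\leq\bnd{j}{\rvect{a}{i'}\;\vect{x}\leq b_{i'}}$, and with $a_{i',j}>0$ this yields $\rvect{a}{i'}\;(\alpha[x_j\mapsto r])\leq b_{i'}$. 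Hence $\alpha[x_j\mapsto r]\models\sys{a}{b}$ and $\alpha\in\proj{\solset{\sys{a}{b}}}{X\setminus\{x_j\}}$.

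The main obstacle, such as it is, is bookkeeping: carefully verifying that each row of the projection matrix $\matr{f}$ unpacks into exactly the intended inequality between the symbolic bounds $\bnd{j}{\cdot}$ with the correct orientation, keeping track of the sign flips coming from dividing by $a_{i,j}<0$ versus $a_{i',j}>0$. There is no conceptual difficulty — once the translation between the matrix form $\matr{f}\matr{a}\vect{x}\leq\matr{f}\vect{b}$ and the set-notation encoding displayed just before \Cref{def:restricted-projection} is made precise, both inclusions are immediate. I would state that translation as a short preliminary observation and then present the two inclusions for the lower-bound case, remarking that the upper-bound case is symmetric and the degenerate case reduces to FM.
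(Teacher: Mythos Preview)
Your argument for one inclusion is correct and matches the paper: given $\alpha\models\fmpproj{\sys{a}{b}}{j}{i}$ for some $i\in\lbs{a}{j}$, setting $r$ to the value of $\bnd{j}{\rvect{a}{i}\;\vect{x}\leq b_i}$ at $\alpha$ yields $\alpha[x_j\mapsto r]\models\sys{a}{b}$. (You label this ``$\subseteq$'' while the paper labels it ``$\supseteq$'', but that is only a matter of which side of the equality is taken as the left-hand side.)

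The other inclusion, however, is not established, and your argument for it is incorrect. You claim that every row of the projection matrix $\matr{f}$ is a conical combination of rows of $\matr{A}$, and conclude that \emph{each} restricted projection $\fmpproj{\sys{a}{b}}{j}{i}$ already contains the whole of $\proj{\solset{\sys{a}{b}}}{X\setminus\{x_j\}}$. But for $i\in\lbs{a}{j}$ and $i'\in\lbs{a}{j}\setminus\{i\}$, the row
\[
\tfrac{1}{a_{i,j}}\,\rvect{e^{(m)}}{i}-\tfrac{1}{a_{i',j}}\,\rvect{e^{(m)}}{i'}
\]
has coefficient $\tfrac{1}{a_{i,j}}<0$ on row $i$ (since $a_{i,j}<0$); it is \emph{not} conical. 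Semantically this row encodes $\bnd{j}{\rvect{a}{i'}\;\vect{x}\leq b_{i'}}\leq\bnd{j}{\rvect{a}{i}\;\vect{x}\leq b_i}$, which a solution of $\sys{a}{b}$ need not satisfy: it holds only when row $i$ really gives the largest lower bound. So the strengthened claim ``the inclusion holds for each single $i$'' is false, and the conical-combination route cannot work.

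What is missing is precisely the choice of $i$. Given $\alpha$ arising as the restriction of a solution of $\sys{a}{b}$, you must \emph{select} $i=\arg\max_{\ell\in\lbs{a}{j}}\alpha(\bnd{j}{\rvect{a}{\ell}\;\vect{x}\leq b_\ell})$; then the lower--lower constraints are satisfied by maximality of $i$, and the lower--upper constraints follow because some value of $x_j$ lies between all lower and all upper bounds. This is the paper's argument for that direction, and some such selection step is unavoidable.
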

\begin{proof}
	The case $\lbs{a}{j}=\emptyset$ or $\ubs{a}{j}=\emptyset$ follows from the correctness of FM.
	Assume $I = \lbs{a}{j}$, the case $I = \ubs{a}{j}$ is analogous.
	\begin{description}
		\item[$\supseteq$:] Let $i \in \lbs{a}{j}$ and $\alpha \models \fmpproj{\sys{a}{b}}{j}{i}$, then for all $\ell \in \lbs{a}{j}$, $u \in \ubs{a}{j}$ it holds $\alpha(\bnd{j}{\rvect{a}{\ell}\;\vect{x} \leq b_\ell}) \leq \alpha(\bnd{j}{\rvect{a}{i}\;\vect{x} \leq b_i}) \leq \alpha(\bnd{j}{\rvect{a}{u}\;\vect{x} \leq b_u})$.
Thus, $\alpha[x_j \mapsto \alpha(\bnd{j}{\rvect{a}{i}\;\vect{x} \leq b_i})] \models \sys{a}{b}$.

		\item[$\subseteq$:] Let $\alpha \models \sys{a}{b}$ and $i = \arg\max_{\ell \in \lbs{a}{j}}(\alpha(\bnd{j}{\rvect{a}{\ell}\;\vect{x} \leq b_\ell}) )$, then for all $u \in \ubs{a}{j}$ it holds\\
		$\alpha(\bnd{j}{\rvect{a}{i}\;\vect{x} \leq b_i}) \leq \alpha(\bnd{j}{\rvect{a}{u}\;\vect{x} \leq b_u})$ and thus $\alpha \models \fmpproj{\sys{a}{b}}{j}{i}$.\qedhere
	\end{description}
\end{proof}

\begin{definition}[FMplex Variable Elimination]
	For $\matr{A} \in \rationals^{m \times n}$, $\vect{b} \in \rationals^{m\times 1}$, $j \in [n]$ and $* \in \{ -,+ \}$, we define
	\begin{align*}
		\fmplexelim^*_{j}(\sys{a}{b}) = & \left.\begin{cases}
		\{\fmpproj{\sys{a}{b}}{j}{i} \mid i \in \sbs{a}{j}\} & \text{ if } \lbs{a}{j} \neq \emptyset \textit{ and }  \ubs{a}{j}\neq\emptyset\\
		\{ \fmpproj{\sys{a}{b}}{j}{\bot}  \} &  \textit{otherwise}.
		\end{cases}\right.
	\end{align*}
\end{definition}

The FMplex elimination defines a set of restricted projections which can be composed to the full projection according to \Cref{lma:restricted-projection}. Lifting this from sets to logic naturally results in the following theorem which demonstrates the usage of our method.

\begin{theorem}\label{thm:correctness-elim}
	Let $\matr{A} \in \rationals^{m \times n}$, $\vect{b} \in \rationals^{m\times 1}$, and $j \in [n]$.
	Then
	\[\exists x_j.\; \sys{a}{b}\quad \equiv \quad\bigvee\nolimits_{S\in \fmplexelim^+_{j}(\sys{a}{b})} S  \quad\equiv\quad \bigvee\nolimits_{S\in \fmplexelim^-_{j}(\sys{a}{b})} S.\] 
\end{theorem}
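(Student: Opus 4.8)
The plan is to reduce \Cref{thm:correctness-elim} to \Cref{lma:restricted-projection} by connecting the set-theoretic statement about solution sets with the logical statement about formulas. The key observation is that, by definition, $\exists x_j.\;\sys{a}{b}$ has exactly the solution set $\proj{\solset{\sys{a}{b}}}{X\setminus\{x_j\}}$, and that a disjunction $\bigvee_{S} S$ of systems is satisfied by an assignment precisely when that assignment satisfies at least one $S$, i.e. its solution set is $\bigcup_{S}\solset{S}$.

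First I would do a case split matching the definition of $\fmplexelim^*_{j}$. In the case $\lbs{a}{j}=\emptyset$ or $\ubs{a}{j}=\emptyset$, both $\fmplexelim^+_{j}(\sys{a}{b})$ and $\fmplexelim^-_{j}(\sys{a}{b})$ are the singleton $\{\fmpproj{\sys{a}{b}}{j}{\bot}\}$, so the disjunction is just that single system, and the claim is exactly the second part of \Cref{lma:restricted-projection}. In the case $\lbs{a}{j}\neq\emptyset$ and $\ubs{a}{j}\neq\emptyset$, I would treat $\fmplexelim^-_{j}$ (the statement for $\fmplexelim^+_{j}$ is fully symmetric, taking $I=\ubs{a}{j}$ instead of $I=\lbs{a}{j}$). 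Here $\fmplexelim^-_{j}(\sys{a}{b}) = \{\fmpproj{\sys{a}{b}}{j}{i}\mid i\in\lbs{a}{j}\}$, so
\[
	\solset{\textstyle\bigvee_{S\in\fmplexelim^-_{j}(\sys{a}{b})} S} \;=\; \bigcup_{i\in\lbs{a}{j}} \solset{\fmpproj{\sys{a}{b}}{j}{i}} \;=\; \proj{\solset{\sys{a}{b}}}{X\setminus\{x_j\}} \;=\; \solset{\exists x_j.\;\sys{a}{b}},
\]
where the middle equality is \Cref{lma:restricted-projection} with $I=\lbs{a}{j}$. Since two formulas are equivalent iff they have the same solution set, this gives $\exists x_j.\;\sys{a}{b}\equiv\bigvee_{S\in\fmplexelim^-_{j}(\sys{a}{b})} S$, and the symmetric argument gives the equivalence with $\fmplexelim^+_{j}$.

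There is one point of care rather than a genuine obstacle: the notation $\sbs{a}{j}$ appearing in the definition of $\fmplexelim^*_{j}$ must be read as $\lbs{a}{j}$ when $*={-}$ and as $\ubs{a}{j}$ when $*={+}$, so that the index set over which the disjunction ranges matches the set $I$ in \Cref{lma:restricted-projection}; once that identification is made, the proof is a direct unfolding of definitions. The only thing to double-check is that $\solset{\exists x_j.\;\sys{a}{b}}$ is indeed the projection $\proj{\solset{\sys{a}{b}}}{X\setminus\{x_j\}}$ — this is immediate from the semantics of existential quantification over the reals, but it is worth stating explicitly since the whole argument hinges on it. Everything else is bookkeeping about how satisfaction of a disjunction distributes over the disjuncts.
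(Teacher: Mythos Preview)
Your proposal is correct and matches the paper's approach exactly: the paper does not give a detailed proof of \Cref{thm:correctness-elim} but states that it follows from \Cref{lma:restricted-projection} by ``lifting this from sets to logic'', which is precisely what you spell out. Your remark about reading $\sbs{a}{j}$ as $\lbs{a}{j}$ or $\ubs{a}{j}$ depending on $*$ is the intended interpretation, and the unbounded case and the identification of $\solset{\exists x_j.\;\sys{a}{b}}$ with the projection are handled correctly.
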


For eliminating multiple variables, we iteratively apply $\fmplexelim^-$ or $\fmplexelim^+$ to each restricted projection resulting from the previous elimination step.
Note that we can choose the next variable to be eliminated as well as the variant independently in every branch.

\begin{example}\label{example:fmplex-elim-procedure}
	We continue \Cref{example:fmplex-idea}, from which we eliminated $x_2$ and now want to eliminate $x_1$: 
	\begin{align*}
		\exists x_1. \; \exists x_2.\; \sys{a}{b}\ &\equiv\ \exists x_1.\bigvee\nolimits_{S\in \fmplexelim^-_{2}(\sys{a}{b})} S\\
		&\equiv \;\exists x_1.\; \left(x_1 \leq 3 \land -3x_1 \leq -3\land -x_1 \leq 1\right)\; \lor  \exists x_1. \;\left(-x_1 \leq -3 \land -2 x_1 \leq 0 \land 0 \leq 4\right) 
	\end{align*}
	We eliminate the two quantifiers for $x_1$ separately, using
	\begin{align*}
		\fmplexelim^-_{1}\left(x_1 \leq 3 \land -3x_1 \leq -3\land -x_1 \leq 1\right) &= \{(0 \leq 2 \land 0 \leq 2), (0 \leq -2\land 0 \leq 4 )\}\ \text{and}\\
		\fmplexelim^-_{1}\left(-x_1 \leq -3 \land -2 x_1 \leq 0 \land 0 \leq 4\right) &= \{(0 \leq 4)\}
	\end{align*}
	giving us the final result $\exists x_1. \; \exists x_2.\; \sys{a}{b}\ \equiv\ ((0 \leq 2 \land 0 \leq 2) \lor (0 \leq 4 \land 0 \leq -2)) \lor (0 \leq 4)$.
\end{example}
	
We analyze the complexity in terms of the number of new rows (or constraints) that are constructed during the elimination of all variables:

\begin{theorem}[Complexity of \fmplexelim]
	Let $\matr{A} \in \rationals^{m \times n}$, and $\vect{b} \in \rationals^{m\times 1}$.
	When eliminating $n$ variables from $\sys{a}{b}$, the $\fmplexelim^-$ method constructs $\mathcal{O}(n\cdot m^{n+1})$ new rows.
\end{theorem}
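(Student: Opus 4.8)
The plan is to view the entire iterated elimination of the $n$ variables by $\fmplexelim^-$ as a finite tree, and to bound level by level both the branching degree and the number of rows in the systems that occur. The key facts will be that no elimination step ever enlarges a system and that each step branches only $m$-fold, so that there are at most $m^{k}$ systems of size at most $m$ at depth $k$.

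Concretely, I would build the tree as follows. Its root, at level $0$, is the input system $\sys{a}{b}$ with $m$ rows. Given a node at level $k<n$ holding a system with $m'$ rows, pick the next variable $x_j$ to be eliminated and let the children of that node at level $k+1$ be exactly the systems in $\fmplexelim^-_j$ of it: if $\lbs{a}{j}\neq\emptyset$ and $\ubs{a}{j}\neq\emptyset$ these are the restricted projections $\fmpproj{\cdot}{j}{i}$ for $i\in\lbs{a}{j}$, and otherwise it is the single system $\fmpproj{\cdot}{j}{\bot}$. Since each restricted projection and each $\bot$-projection zeroes the $j$th column (Definition~\ref{def:restricted-projection}), after $n$ levels every leaf is variable-free, so the tree has depth exactly $n$, and the number of rows constructed during the elimination is at most the total number of rows appearing in the non-root nodes.

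The count then rests on two invariants. First, \emph{the number of rows never increases along an edge}, hence every node holds at most $m$ rows: by \Cref{def:restricted-projection} a restricted projection $\fmpproj{\cdot}{j}{i}$ has exactly $m'-1$ rows, and $\fmpproj{\cdot}{j}{\bot}$ has $|\nbs{a}{j}|\le m'$ rows, so induction on the level gives the bound $m$. Second, \emph{every node has at most $m$ children}: a node has either the single child $\fmpproj{\cdot}{j}{\bot}$, or one child per index of $\lbs{a}{j}\subseteq[m']$, hence at most $m'\le m$ children; therefore there are at most $m^{k}$ nodes at level $k$. Combining, the passage from level $k-1$ to level $k$ creates at most $m^{k}\cdot m=m^{k+1}$ new rows, and summing over $k=1,\dots,n$ bounds the total by $\sum_{k=1}^{n}m^{k+1}\le n\cdot m^{n+1}$, i.e.\ $\mathcal{O}(n\cdot m^{n+1})$.

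I expect the argument to be routine bookkeeping; the one place that needs genuine care is the first invariant, where one must check in each case of \Cref{def:restricted-projection} that the step does not enlarge the system, since this is exactly what keeps the per-node row bound $m$ independent of the depth and hence the blow-up per level only $m^{k+1}$ rather than something worse. (Incidentally, evaluating $\sum_{k=1}^{n}m^{k+1}$ as a geometric series would even yield $\mathcal{O}(m^{n+1})$; bounding each summand by the largest one already suffices for the stated bound.)
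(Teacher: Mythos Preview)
Your proof is correct and follows essentially the same tree-counting idea as the paper: bound the number of systems at each depth and the size of each system, then sum over depths. The paper phrases this via a recurrence $N(m,n)\le(m-1)\bigl((m-1)+N(m-1,n-1)\bigr)$ that additionally tracks the strict decrease in system size at each branching step, but this sharper bookkeeping does not improve the stated $\mathcal{O}(n\cdot m^{n+1})$ bound, and your coarser invariants (size $\le m$, branching $\le m$) already suffice.
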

\begin{proof}
	The number $N(m,n)$ of constructed rows is maximal if the system consists only of lower bounds and one upper bound.
	Then, $\fmplexelim^-$ yields $m-1$ new systems of size $m-1$, from which $n-1$ variables need to be eliminated; thus $N(m,n) \leq (m-1)\cdot((m-1) + N(m-1,n-1))$.
	With $k = \min(n,m)$, we obtain
	$N(m,n) \leq \sum\limits_{i=1}^{k}(m-i)\cdot \prod\limits_{j=1}^{i}(m-j) \leq n\cdot m^{n+1}$.
\end{proof}

While still exponential, this bound is considerably better than the theoretical doubly exponential worst-case complexity of the FM method.
Shortly speaking, FMplex trades one exponential step at the cost of the result being a decomposition into multiple partial projections.
However, there are systems for which FMplex produces strictly more rows than the FM method: In the worst case from the above proof, FM obtains a single system of the same size as each of the sub-problems computed by $\fmplexelim^-$.
Although in this case, we could simply employ $\fmplexelim^+$ instead, it is unclear whether there exists a rule for employing $\fmplexelim^-$ or $\fmplexelim^+$ that never produces more constraints than FM.

Like FM, FMplex keeps redundancies from the input throughout the algorithm, thus there might be identical rows in the same or across different sub-problems.
But in contrast to FM, FMplex does not introduce any redundancies by construction in the sense of \Cref{def:redundancyconstruction}.

\begin{theorem}
	Let $\matr{A} \in \rationals^{m \times n}$, $\vect{b} \in \rationals^{m\times 1}$ and $k\in[m]$.
	Assume $(\sys{a^{(0)}}{b^{(0)}}) = (\sys{a}{b})$ and for all $j \in [k]$, let $(\sys{a^{(j)}}{b^{(j)}}) \in \fmplexelim^-_{j}(\sys{a^{(j-1)}}{b^{(j-1)}}) \cup \fmplexelim^+_{j}(\sys{a^{(j-1)}}{b^{(j-1)}})$.
Let $\matr{F^{(1)}}, \ldots, \matr{F^{(k)}}$ be the respective projection matrices, and $\matr{F} = \matr{F^{(k)}} \cdot \ldots \cdot \matr{F^{(1)}}$.
	Then $F$ is linearly independent.
\end{theorem}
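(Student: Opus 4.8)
The plan is to reduce the claim to a purely structural fact about the individual projection matrices $\matr{F^{(j)}}$. Write $m_0 := m$ for the number of rows of $\matr{A}$ and $m_j$ for the number of rows of $\matr{A^{(j)}}$, so that $\matr{F^{(j)}} \in \rationals^{m_j \times m_{j-1}}$ and $\matr{F} \in \rationals^{m_k \times m}$. Over a field, a matrix is linearly independent (i.e. its rank equals its number of rows) if and only if it has a right inverse, and right inverses compose in the obvious nested way: if $\matr{F^{(j)}}\matr{R^{(j)}} = \matr{e}^{(m_j)}$ for each $j \in [k]$, then $\matr{F}\cdot(\matr{R^{(1)}}\cdots\matr{R^{(k)}}) = \matr{e}^{(m_k)}$, since the inner products telescope. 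Hence it suffices to show that each projection matrix $\matr{F^{(j)}}$ from \Cref{def:restricted-projection} is linearly independent.

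The second step is to establish this by reading the rows of $\matr{F^{(j)}}$ directly off \Cref{def:restricted-projection}. If $\lbs{a^{(j-1)}}{j} = \emptyset$ or $\ubs{a^{(j-1)}}{j} = \emptyset$, then $\matr{F^{(j)}}$ consists of the pairwise distinct unit rows $\rvect{e^{(m_{j-1})}}{i'}$ for $i' \in \nbs{a^{(j-1)}}{j}$, which are visibly independent. Otherwise, let $i \in \lbs{a^{(j-1)}}{j} \cup \ubs{a^{(j-1)}}{j}$ be the designated pivot row (the choice of variant $* \in \{-,+\}$ plays no role). Then each of the $m_{j-1}-1$ rows of $\matr{F^{(j)}}$ carries a distinct label $i' \in [m_{j-1}]\setminus\{i\}$: it equals $\rvect{e^{(m_{j-1})}}{i'}$ when $i' \in \nbs{a^{(j-1)}}{j}$, and $\pm\frac{1}{a^{(j-1)}_{i,j}}\rvect{e^{(m_{j-1})}}{i} \mp \frac{1}{a^{(j-1)}_{i',j}}\rvect{e^{(m_{j-1})}}{i'}$ otherwise (with $a^{(j-1)}_{i',j}\neq 0$ since then $i' \in \lbs{a^{(j-1)}}{j}\cup\ubs{a^{(j-1)}}{j}$). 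The crucial observation is that no row other than the one labelled $i'$ mentions $\rvect{e^{(m_{j-1})}}{i'}$; so in column $i'$ the row labelled $i'$ has a nonzero entry while every other row has a $0$ there. Hence, in any vanishing linear combination of the rows of $\matr{F^{(j)}}$, reading off column $i'$ forces the coefficient of row $i'$ to be $0$ — for every $i'$ — so the rows are linearly independent. (Concretely, deleting column $i$ from $\matr{F^{(j)}}$ leaves an invertible square matrix, and re-inserting a zero row at position $i$ into its inverse yields a right inverse of $\matr{F^{(j)}}$.)

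Combining the two steps yields that $\matr{F}$ is linearly independent. I expect the only subtle point to be the composition step: a product of full-row-rank matrices need not have full row rank in general, so one has to track the correct invariant — existence of a right inverse, equivalently triviality of the left kernel — which is preserved here precisely because consecutive factors are dimensionally matched ($\matr{F^{(j)}}$ has exactly $m_{j-1}$ columns, the row count of $\matr{F^{(j-1)}}$) and each carries that property. Everything else is routine inspection of \Cref{def:restricted-projection}, including the two degenerate cases and the two variants.
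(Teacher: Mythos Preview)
Your proof is correct and follows the same approach as the paper's own proof, which is the one-liner ``By definition, the projection matrices are linearly independent, and thus so is their product $\matr{F}$.'' You have simply expanded both halves of that sentence---the column-support argument for each $\matr{F^{(j)}}$ and the right-inverse composition for the product---so there is no substantive difference in method.
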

\begin{proof}
  By definition, the projection matrices are linearly independent, and thus so is their product $\matr{F}$.
\end{proof}

\section{FMplex as Satisfiability Checking Procedure}
\label{sec:sat}

A formula is satisfiable if and only if eliminating all variables (using any quantifier elimination method such as FM or FMplex) yields a tautology.
However, FMplex computes smaller sub-problems whose satisfiability implies the satisfiability of the original problem.
Therefore, we do not compute the whole projection at once, but explore the decomposition using a depth-first search.
The resulting search tree has the original system as root, and each node has as children the systems resulting from restricted projections.
The original system is satisfiable if and only if a leaf without any trivially false constraints exists.
An example is depicted in \Cref{figure:example-searchtree}.
We start with a basic version of the algorithm and then examine how the search tree can be pruned, resulting in two variants; all versions are given in \Cref{algo:combined}.

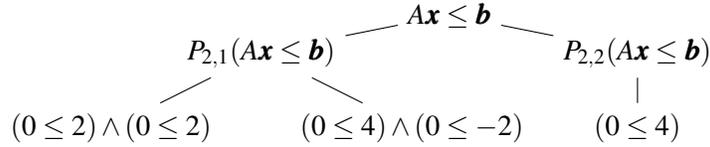
\begin{figure}[h!]
	\centering
	\begin{tikzpicture}
		\tikzstyle{level 1} = [level distance=5mm, sibling distance = 50mm]
		\tikzstyle{level 2} = [level distance=10mm, sibling distance = 40mm]
			\node[] (root) {$\sys{a}{b}$} 
				child{node[] (v11) {$\fmpproj{\sys{a}{b}}{2}{1}$}
					child{node[] (v21) {$(0 \leq 2)\land (0 \leq 2)$}}
					child{node[] (v22) {$(0 \leq 4) \land (0 \leq -2)$}}
				}
				child{node[] (v12) {$\fmpproj{\sys{a}{b}}{2}{2}$}
					child{node[] (v23) {$(0 \leq 4)$}}
				};
		\end{tikzpicture}
	
	\caption{%
	The search tree corresponding to \Cref{example:fmplex-elim-procedure}.
	The very first leaf (bottom left) is already satisfiable, meaning that the rest would not need to be computed.%
	}\label{figure:example-searchtree}
\end{figure}

An important observation is that we can decide independently for each node of the search tree, which variable to eliminate next and whether to branch on lower or on upper bounds.

\begin{definition}[Branch Choices]
	The set of \emph{branch choices} for a system $\sys{a}{b}$ is
	\begin{align*}
		\textit{branch\_choices}(\sys{a}{b}) = & \{ \{ (x_j, i) \mid i \in \lbs{a}{j} \} \mid j \in [n]\wedge \lbs{a}{j} \neq \emptyset \wedge  \ubs{a}{j} \neq \emptyset \} \\
		\cup & \{ \{ (x_j, i) \mid i \in \ubs{a}{j} \} \mid j \in [n]\wedge \lbs{a}{j} \neq \emptyset \wedge \ubs{a}{j} \neq \emptyset \} \\
		\cup & \{ \{ (x_j, \bot) \} \mid j \in [n]\wedge (\lbs{a}{j} = \emptyset \vee \ubs{a}{j} = \emptyset) \}.
	\end{align*}
\end{definition}

For an initial input $\sys{\widehat{a}}{\widehat{b}}$ with $\widehat{m}$ rows, we define the depth-first search using the recursive method $\texttt{FMplex}(\sys{\widehat{a}}{\widehat{b}};\sys{a}{b},\matr{f})$ in \refalgobase where $\sys{a}{b}$ is the currently processed sub-problem in the recursion tree.
We track the relation of $\sys{a}{b}$ to $\sys{\widehat{a}}{\widehat{b}}$ in terms of linear combinations using the parameter $\matr{f}$.
The initial call is defined as $\texttt{FMplex}(\sys{\widehat{a}}{\widehat{b}}) = \texttt{FMplex}(\sys{\widehat{a}}{\widehat{b}};\sys{\widehat{a}}{\widehat{b}},\matr{e}^{(\widehat{m})})$.
We allow that $\sys{a}{b}$ contains identical rows when they are obtained in different ways (which is reflected by $\matr{f}$).
We need to keep these duplicates for proving the results of this section.

\vspace{-0.5em}\paragraph{Solutions} If a trivially satisfiable node is found, the algorithm constructs an assignment starting with the empty assignment and extends it in reverse order in which the variables were eliminated.
For every variable $x_j$, a value is picked above all lower and below all upper bounds on $x_j$ evaluated at the underlying assignment.
By the semantics of the projection, the value of the designated (largest lower or smallest upper) bound on $x_j$ is suitable.

\vspace{-0.5em}\paragraph{Conflicts} We distinguish inconsistencies in $\sys{a}{b}$ by the following notions: 
We call a row $i$ of $\sys{a}{b}$ a \emph{conflict} if it is of the form $\rvect{a}{i} = \vect{0}^{(n)}$ with $b_i < 0$.
We call the conflict \emph{global} if $\rvect{f}{i} \geq 0$  and \emph{local} otherwise.
In case of a global conflict, Farkas' Lemma allows to deduce the unsatisfiability of $\sys{\widehat{a}}{\widehat{b}}$, thus stopping the search before the whole search tree is generated.
Then a set of conflicting rows $K$ of the input system corresponding to $\rvect{f}{i}$ is returned.
In particular, the set $\{\rvect{\widehat{a}}{j}\; \vect{x} \leq \widehat{b}_j \mid f_{i,j} \neq 0\}$ is a minimal unsatisfiable subset of the constraints in $\sys{\widehat{a}}{\widehat{b}}$.
In case of a local conflict, we simply continue to explore the search tree.
The algorithm returns $\partialunsat$ to indicate that $\sys{a}{b}$ is unsatisfiable, but the unsatisfiability of $\sys{\widehat{a}}{\widehat{b}}$ cannot be derived.
This approach, formalized in \refalgobase\!\!, guarantees that the initial call will never return $\partialunsat$; we always find either a global conflict or a solution.

\vspace*{0.5em}\noindent The correctness and completeness of \texttt{FMplex} follows from \Cref{thm:correctness-elim} and \Cref{thm:global-conflict}.

\begin{theorem} \label{thm:global-conflict}
	Let $\matr{\widehat{a}} \in \rationals^{\widehat{m} \times n}$, and $\vect{\widehat{b}} \in \rationals^{\widehat{m}}\times 1$.
	Then $\sys{\widehat{a}}{\widehat{b}}$ is unsatisfiable if and only if the call $\texttt{FMplex}(\sys{\widehat{a}}{\widehat{b}})$ to \refalgobase terminates with a global conflict.
\end{theorem}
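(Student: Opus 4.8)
The plan is to prove both directions by relating runs of the algorithm to the projection decomposition established in \Cref{thm:correctness-elim} and \Cref{lma:restricted-projection}. The key bookkeeping invariant is: whenever the recursion reaches a node with system $\sys{a}{b}$ and tracking matrix $\matr{f}$, we have $\matr{a} = \matr{f}\matr{\widehat{a}}$ and $\vect{b} = \matr{f}\vect{\widehat{b}}$, and moreover $\solset{\sys{a}{b}} = \proj{\solset{\sys{\widehat{a}}{\widehat{b}}}}{X \setminus \{x_{j_1}, \ldots, x_{j_d}\}}$ restricted further by the sequence of ``designated bound'' choices made along the path — but the clean statement we actually need is just that each projection step is of the restricted-projection form of \Cref{def:restricted-projection}, so $\matr{f}$ is obtained by left-multiplying with a projection matrix, and that the entries of that projection matrix in the rows indexed by $\nbs{a}{j}$ are unit vectors while the rows for the designated/other bounds are conical combinations with the designated coefficient always positive. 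This is a straightforward induction on recursion depth using the definitions.

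For the ``if'' direction, suppose the call terminates with a global conflict: some leaf contains a row $i$ with $\rvect{a}{i} = \vect{0}^{(n)}$, $b_i < 0$, and $\rvect{f}{i} \geq 0$. Since $\rvect{a}{i} = \rvect{f}{i}\matr{\widehat{a}} = \vect{0}^{(n)}$, $\rvect{f}{i}$ is a nonnegative vector in the left kernel of $\matr{\widehat{a}}$, and $\rvect{f}{i}\vect{\widehat{b}} = b_i < 0$. By Farkas' Lemma (\Cref{thm:farkas}), $\sys{\widehat{a}}{\widehat{b}}$ is unsatisfiable. This direction is essentially immediate once the invariant $\matr{a} = \matr{f}\matr{\widehat{a}}$, $\vect{b} = \matr{f}\vect{\widehat{b}}$ is in place.

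For the ``only if'' direction — which I expect to be the main obstacle — suppose $\sys{\widehat{a}}{\widehat{b}}$ is unsatisfiable; I must show the search does not merely return $\partialunsat$ but genuinely finds a global conflict on some leaf. The argument is by induction on the number $n$ of variables still to be eliminated, strengthened to the statement: \emph{for any node $\sys{a}{b}$ with tracking matrix $\matr{f}$ such that $\rvect{f}{r} \geq 0$ for every row $r$ (equivalently, every row of $\sys{a}{b}$ is a conical combination of rows of $\sys{\widehat{a}}{\widehat{b}}$), if $\sys{a}{b}$ is unsatisfiable then the subtree rooted here terminates with a global conflict.} The base case $n = 0$: an unsatisfiable system with no variables must contain a row $\vect{0}^{(n)}\vect{x} \leq b_i$ with $b_i < 0$, and by the strengthened hypothesis $\rvect{f}{i} \geq 0$, so it is a global conflict. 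For the step, pick the branch choice the algorithm uses at this node, say branching on the lower bounds of $x_j$ (the upper-bound and the no-bound cases are analogous, the latter using correctness of FM). By \Cref{lma:restricted-projection}, $\proj{\solset{\sys{a}{b}}}{X \setminus \{x_j\}} = \bigcup_{i \in \lbs{a}{j}} \solset{\fmpproj{\sys{a}{b}}{j}{i}}$, and since $\sys{a}{b}$ is unsatisfiable, so is every restricted projection $\fmpproj{\sys{a}{b}}{j}{i}$. It remains to check that the tracking matrix of each child again has all rows nonnegative: the child matrix is $\matr{f'}\matr{f}$ where $\matr{f'}$ is the projection matrix of \Cref{def:restricted-projection}; its rows are either unit vectors (for $\nbs{a}{j}$) or sums $\frac{1}{a_{i,j}}\rvect{e^{(m)}}{i} - \frac{1}{a_{i',j}}\rvect{e^{(m)}}{i'}$ with $a_{i,j} < 0$ and $a_{i',j} < 0$ for $i' \in \lbs{a}{j}$, resp. $a_{i',j} > 0$ for $i' \in \ubs{a}{j}$, so in all cases the coefficients $\tfrac{1}{a_{i,j}}, -\tfrac{1}{a_{i',j}}$ are nonnegative, hence $\matr{f'} \geq 0$ and $\matr{f'}\matr{f} \geq 0$. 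So the induction hypothesis applies to each child, giving a global conflict in each child's subtree; in particular the algorithm finds one. Finally, instantiating the strengthened claim at the root, where $\matr{f} = \matr{e}^{(\widehat{m})} \geq 0$, yields the theorem. The one subtlety to handle carefully is that the algorithm commits to a \emph{single} branch choice per node rather than exploring all of them, so I must make sure the inductive statement is about ``the subtree the algorithm actually builds,'' and that whichever branch choice is taken, unsatisfiability of the parent propagates to every child — which is exactly what \Cref{lma:restricted-projection} guarantees, since it is an equality of solution sets for each admissible choice of $I \in \{\lbs{a}{j}, \ubs{a}{j}\}$.
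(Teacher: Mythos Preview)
Your ``if'' direction is fine. The ``only if'' direction contains a genuine gap: the claim that the projection matrix $\matr{f'}$ has all nonnegative entries is false. When $i \in \lbs{a}{j}$ is the designated bound (so $a_{i,j} < 0$) and $i' \in \lbs{a}{j} \setminus \{i\}$ is another lower bound (so $a_{i',j} < 0$), the corresponding row of $\matr{f'}$ is $\frac{1}{a_{i,j}}\rvect{e^{(m)}}{i} - \frac{1}{a_{i',j}}\rvect{e^{(m)}}{i'}$, whose coefficient on $\rvect{e^{(m)}}{i}$ is $\frac{1}{a_{i,j}} < 0$. Thus $\matr{f'} \not\geq 0$ in general, and your strengthened invariant ``all rows of the tracking matrix are nonnegative'' is simply not preserved by the recursion. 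This is not a technicality: the existence of negative entries in $\matr{f'}\matr{f}$ is precisely what makes \emph{local} conflicts possible, and hence what makes the theorem non-trivial. Your induction would, if it worked, show that \emph{no} local conflict can ever arise---but they do (see \Cref{example:fmplex-idea} and the example after \Cref{thm:redundancies}).

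The paper's argument takes a different route. Rather than trying to maintain nonnegativity of the entire tracking matrix, it fixes a \emph{minimal} unsatisfiable subset $\widehat{K}$ of the input constraints and constructs one specific root-to-leaf path in the search tree along which the eventual conflict row is a linear combination of only the rows in $\widehat{K}$. Minimality of $\widehat{K}$ then forces that combination to be conical (a non-positive coefficient on some row of $\widehat{K}$ would mean $\widehat{K}$ minus that row is already infeasible, contradicting minimality), so the conflict is global. The key idea you are missing is this appeal to a minimal infeasible subsystem; without it there is no reason the algorithm avoids getting stuck with only local conflicts in every branch.
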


  \begin{algorithm}[p]%
    \caption{$\texttt{FMplex}(\sys{\widehat{a}}{\widehat{b}};\sys{a}{b},\matr{f},$\chng{$\mathcal{N},I$}$,$\chngd{$\texttt{lvl},\texttt{bt\_lvl}$}$)$}\label{algo:combined}
      \begin{description}
		\item[Algorithm 1a] \label{algo:fmplex1} The base method consists of the plain (unframed and unfilled) parts.
		\item[Algorithm 1b] \label{algo:leaving-out-bounds} Consists of the base method and the \chng{framed parts}. 
		\item[Algorithm 1c] \label{algo:backtracking} Consists of the base method, the \chng{framed parts} and the \chngd{filled boxes}.
	\end{description}
	\hrulefill\\
	\SetKwInOut{Data}{Data}
    \SetKwInOut{Input}{Input}
    \SetKwInOut{Output}{Output}
    \DontPrintSemicolon
    \SetKwFunction{FCheck}{FMplex}
    \SetKwProg{Fn}{Function}{:}{}
	\Data{$\matr{\widehat{a}} \in \rationals^{\widehat{m} \times n},\ \vect{\widehat{b}}\in \rationals^{\widehat{m}}$}
    \Input{$\matr{a} \in \rationals^{m \times n},\ \vect{b}\in \rationals^{m}$, $\matr{f} \in \rationals^{m \times \widehat{m}}$ s.t. $\matr{f}\matr{\widehat{a}} = \matr{a}$ and $\matr{f}\vect{\widehat{b}} = \vect{b}$, \chng{$\mathcal{N}\subseteq [\widehat{m}]$, $I\subseteq [\widehat{m}]$}, \chngd{$\texttt{lvl} \in [n] \cup \{ 0 \}$, and $\texttt{bt\_lvl}: [m] \to [n] \cup \{ 0 \}$}}
    \Output{(\sat, $\alpha$) with $\alpha \models \matr{a}\vect{x} \leq \vect{b}$, or (\unsat, $K$) where $K \subseteq [\widehat{m}]$, or\\
	(\partialunsat, \chngd{$l, K$}) \chngd{where $l \in [n]$ and $K \subseteq [\widehat{m}]$}}
	\BlankLine
	\lIf{$\matr{a} = 0 \land \vect{b} \geq 0$} {
		\Return{$($\sat, $())$} 
	}
    \lIf{$\exists i \in [m].\; \rvect{a}{i} = 0 \land b_i < 0 \land \rvect{f}{i} \geq 0$}{
            \Return{$(\unsat, \{ i' \mid f_{i,i'} \neq 0 \})$} 
    } 
	\boxchngd{a}
	\boxchngdstart{a}
	\If{$\exists i \in [m].\; \rvect{a}{i} = 0 \land b_i < 0 \land \rvect{f}{i} \ngeq 0$}{
		$i := \arg\min_{i \in [m]}\{ \texttt{bt\_lvl}(i) \mid \rvect{a}{i} = 0 \land b_i < 0 \}$\;
		\Return $(\partialunsat,\texttt{bt\_lvl}(i)-1,\{ i' \mid f_{i,i'} \neq 0 \})$ \;
	}
	\boxchngdend{a}%
	\chngd{$K = \emptyset$} \;
	\textbf{choose} $V \in \textit{branch\_choices}(\matr{a}\vect{x} \leq \vect{b},\chng{$\{ \mathcal{B}^{-1}_{\mathcal{N},F}(i) \mid i \in I \}$})$\;
    \ForEach{$(x_j, i) \in V$}{
		\textbf{compute} $\matr{a'}\vect{x} \leq \vect{b'} := \fmpproj{\matr{a}\vect{x} \leq \vect{b}}{j}{i}$ with projection matrix $\matr{f'}$ \chngd{and backtrack levels $\texttt{bt\_lvl}'$} \;
		\chng{$\mathcal{N}'$ := $\mathcal{N} \cup \{ \mathcal{B}_{\mathcal{N},\matr{f}}(i) \}$ \textbf{ if }  $i \neq \bot$ \textbf{ else }   $\mathcal{N}$}\;
        \Switch{\FCheck($\sys{\widehat{a}}{\widehat{b}};\matr{a'}\vect{x} \leq \vect{b'},\matr{f'}\matr{f},\chng{$\mathcal{N}',I$},\chngd{$\texttt{\upshape lvl}+1,\texttt{\upshape bt\_lvl}'$}$)}{
			\lCase{$(\unsat, K')$}{\Return{$(\unsat, K')$}}
        	\lCase{$(\sat, \alpha)$}{\Return{$(\sat, \alpha[x_j \mapsto r])$} for a suitable $r \in \rationals$}
			\boxchngd{b}
			\boxchngdstart{b}
			\Case{$(\partialunsat, l, K')$}{
				\lIf{$l<\texttt{\upshape lvl}$}{
					\Return{$(\partialunsat, l, K')$} %
				}
				\lElse {
					$K = K \cup K'$ %
				}
			}
    	}
		\boxchngdend{b}	\chng{$I$ := $I \cup \{ \mathcal{B}_{\mathcal{N},\matr{f}}(i) \}$} \;\label{algo:backtracking:ignore}
    }
	\chngd{
	\lIf{$\texttt{\upshape lvl}=0$}{
		\Return{$(\unsat, K)$}
	}}\;
	\Return{$(\partialunsat, \chngd{\texttt{\upshape lvl-1}, K})$}
  \end{algorithm}

\begin{proof}[Proof Idea for \Cref{thm:global-conflict}]
	If $\sys{\widehat{a}}{\widehat{b}}$ is unsatisfiable, then there exists a minimal unsatisfiable subset $\widehat{K}$ of the corresponding constraints. We construct a path in the search tree induced by \refalgobase yielding a conflict that is a linear combination of $\widehat{K}$. As $\widehat{K}$ is minimal, the linear combination is positive, i.e. the conflict is global.
	The other direction of the equivalence follows immediately with Farkas' Lemma.
	Consult the
	\ifextended
	appendix
	\else
	extended version
	\fi
	for a detailed proof.
\end{proof}

\subsection{Avoiding Redundant Checks}\label{subsec:avoiding-redundancies}

We observe that each row $i$ in a sub-problem $\sys{a}{b}$ in the recursion tree of $\texttt{FMplex}(\sys{\widehat{a}}{\widehat{b}})$ corresponds to a row $\hat{\imath}$ in $\sys{\widehat{a}}{\widehat{b}}$ in the sense that it is a linear combination of the rows $\{ \hat{\imath} \} \cup \mathcal{N}$ of $\sys{\widehat{a}}{\widehat{b}}$, where $\mathcal{N} \subseteq [\widehat{m}]$ corresponds to the lower/upper bounds designated as largest/smallest one to compute $\sys{a}{b}$:

\begin{theorem} \label{thm:basis-nonbasis-correspondence}
	Let $\matr{\widehat{a}} \in \rationals^{\widehat{m} \times n}$ and $\vect{\widehat{b}} \in \rationals^{\widehat{m}\times 1}$.
	Let $\texttt{FMplex}(\sys{\widehat{a}}{\widehat{b}};\matr{{a}}\vect{x} \leq \vect{{b}},\matr{f})$ be a call in the recursion tree of the call $\texttt{FMplex}(\sys{\widehat{a}}{\widehat{b}})$ to \refalgobase, where $\matr{A}\in \rationals^{m\times n}$ and $\vect{b} \in \rationals^{m\times 1}$ (by construction $m\leq \widehat{m}$).

	Then there exists a set $\mathcal{N} \subseteq [\widehat{m}]$ such that
	\begin{enumerate}
		\item $\sys{a}{b}$ is satisfiable if and only if $(\sys{\widehat{a}}{\widehat{b}}) \wedge (\submatr{\widehat{a}}{\mathcal{N}}\vect{x} = \subvect{\widehat{b}}{\mathcal{N}})$ is satisfiable,
		\item there exists an injective mapping $\mathcal{B}_{\mathcal{N}, F}: [m] \to [\hat{m}], i \mapsto \hat{\imath}$ with
		$\{ \hat{\imath} \} = \{ i' \in [\hat{m}] \mid f_{i,i'} \neq 0 \} \setminus \mathcal{N}$.
	\end{enumerate}
\end{theorem}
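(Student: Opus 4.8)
The plan is to prove the two assertions (1) and (2) simultaneously, by induction on the depth $d\geq 0$ of the call $\texttt{FMplex}(\sys{\widehat{a}}{\widehat{b}};\sys{a}{b},\matr{f})$ in the recursion tree rooted at $\texttt{FMplex}(\sys{\widehat{a}}{\widehat{b}})$, taking for $\mathcal{N}$ exactly the set that \refalgobase maintains as $\mathcal{N}$ in its framed part, namely the one obtained by adjoining $\mathcal{B}_{\mathcal{N},\matr{f}}(i)$ each time a bound $i\neq\bot$ is designated along the path to the call. Let $Y\subseteq X$ be the $d$ eliminated variables; since $\cvect{a}{j}=\vect{0}^{(m)}$ for each $x_j\in Y$, the set $\solset{\sys{a}{b}}$ is cylindrical over $Y$, i.e.\ membership depends only on the restriction to $X\setminus Y$. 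To make the induction go through I would strengthen (1) to the solution-set equality
\[
  \proj{\solset{\sys{a}{b}}}{X\setminus Y}\;=\;\proj{\solset{(\sys{\widehat{a}}{\widehat{b}})\wedge(\submatr{\widehat{a}}{\mathcal{N}}\vect{x}=\subvect{\widehat{b}}{\mathcal{N}})}}{X\setminus Y},
\]
from which (1) follows by checking both sides for nonemptiness. The base case $d=0$ is immediate: $\sys{a}{b}=(\sys{\widehat{a}}{\widehat{b}})$, $\matr{f}=\matr{e}^{(\widehat{m})}$, $\mathcal{N}=\emptyset$, and $\mathcal{B}_{\emptyset,\matr{e}^{(\widehat{m})}}$ is the identity on $[\widehat{m}]$, so both parts hold.

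For the inductive step of (2), let the child arise from a pair $(x_j,i)$ via $\sys{a'}{b'}:=\fmpproj{\sys{a}{b}}{j}{i}$ with projection matrix $\matr{f'}$, so the tracking matrix of the child is $\matr{f'}\matr{f}$. If $i=\bot$, the rows of $\matr{f'}\matr{f}$ are precisely the rows $\rvect{f}{i'}$ for $i'\in\nbs{a}{j}$, so keeping $\mathcal{N}$ unchanged, the child's mapping is just the restriction of $\mathcal{B}_{\mathcal{N},\matr{f}}$ to $\nbs{a}{j}$ and stays injective. If $i\in\lbs{a}{j}$ (the case $i\in\ubs{a}{j}$ is symmetric), put $\hat{\imath}:=\mathcal{B}_{\mathcal{N},\matr{f}}(i)$ and $\mathcal{N}':=\mathcal{N}\cup\{\hat{\imath}\}$. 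By \Cref{def:restricted-projection}, every row of $\matr{f'}\matr{f}$ has the form $c\,\rvect{f}{i}+c'\,\rvect{f}{i'}$ for a unique $i'\in[m]\setminus\{i\}$, with $c'\neq0$ (and $c=0$ when $i'\in\nbs{a}{j}$); writing $\hat{\imath}':=\mathcal{B}_{\mathcal{N},\matr{f}}(i')$, its nonzero columns lie in $\{\hat{\imath},\hat{\imath}'\}\cup\mathcal{N}$, and column $\hat{\imath}'$ does \emph{not} cancel, because $\hat{\imath}'\neq\hat{\imath}$ and $\hat{\imath}'\notin\mathcal{N}$ (by injectivity of $\mathcal{B}_{\mathcal{N},\matr{f}}$) force $f_{i,\hat{\imath}'}=0$. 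Hence the nonzero columns of this row outside $\mathcal{N}'$ are exactly $\{\hat{\imath}'\}$, so the child's mapping is well defined and sends each new row to $\mathcal{B}_{\mathcal{N},\matr{f}}(i')$; since the new rows are in bijection with $[m]\setminus\{i\}$ and $\mathcal{B}_{\mathcal{N},\matr{f}}$ is injective, it is injective.

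For the inductive step of the strengthened (1) the key are two equivalences, again for $i\in\lbs{a}{j}$. First, designating $i$ as the largest lower bound amounts to tightening constraint $i$ before projecting: from (an extension of) the argument in the proof of \Cref{lma:restricted-projection} one obtains, for every $\gamma\colon X\to\reals$, that $\gamma\models(\sys{a'}{b'})$ iff $\gamma[x_j\mapsto\gamma(\bnd{j}{\rvect{a}{i}\vect{x}\leq b_i})]\models(\sys{a}{b})$, where by construction of $\bnd{j}{\cdot}$ the displaced assignment makes constraint $i$ tight. Second, since $\rvect{a}{i}=\rvect{f}{i}\matr{\widehat{a}}$ is a linear combination of the input rows $\{\hat{\imath}\}\cup(\mathcal{N}\cap\{i''\mid f_{i,i''}\neq0\})$ with nonzero coefficient on $\hat{\imath}$, any assignment satisfying $(\submatr{\widehat{a}}{\mathcal{N}}\vect{x}=\subvect{\widehat{b}}{\mathcal{N}})$ makes constraint $i$ of $\sys{a}{b}$ tight iff it makes input constraint $\hat{\imath}$ tight --- so adjoining ``constraint $i$ tight'' on the sub-problem side corresponds exactly to passing from $\mathcal{N}$ to $\mathcal{N}'$ on the input side. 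Composing these with the induction hypothesis for the parent --- and using that $\rvect{a}{i}$ has zero entries in the $Y$-columns, so its value is insensitive to the restriction to $X\setminus Y$, and that overwriting a witness at the freshly eliminated $x_j$ by its forced value preserves the relevant constraints --- yields the strengthened (1) for the child and $\mathcal{N}'$. The $i=\bot$ case is analogous but simpler, via the ``$\lbs{a}{j}=\emptyset$ or $\ubs{a}{j}=\emptyset$'' clause of \Cref{lma:restricted-projection} and $\mathcal{N}'=\mathcal{N}$.

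The main obstacle I anticipate is not any individual equivalence but the consistent bookkeeping of the eliminated variables: $\sys{a}{b}$ lives over all of $X$ but is cylindrical over $Y$, whereas $(\sys{\widehat{a}}{\widehat{b}})$ together with the $\mathcal{N}$-equalities need not be, so the invariant has to be phrased modulo $\proj{\cdot}{X\setminus Y}$ throughout and one must verify, in both directions of every ``iff'', that extending or overwriting an assignment at $x_j$ behaves as expected --- in particular that a witness delivered by the induction hypothesis can be modified at $x_j$ to its forced value without violating any constraint. The second delicate point is the non-cancellation step in (2): checking that the ``private'' input index $\hat{\imath}'$ of a row really survives the subtraction in $c\,\rvect{f}{i}+c'\,\rvect{f}{i'}$, where the injectivity of $\mathcal{B}_{\mathcal{N},\matr{f}}$ and $\hat{\imath}\notin\mathcal{N}$ are precisely what is needed.
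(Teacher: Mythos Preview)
Your proposal is correct and follows the same route the paper sketches: induction on the depth of the recursive call, with $\mathcal{N}$ built by adjoining the original index of each designated bound, and $\mathcal{B}_{\mathcal{N},F}$ tracking the unique ``private'' input row of every current row. Your strengthening of (1) to a projected solution-set equality is exactly the right invariant to carry through the induction, and your identification of the two delicate points --- the non-cancellation of $\hat{\imath}'$ via injectivity of $\mathcal{B}_{\mathcal{N},F}$, and the insensitivity of tight rows to the eliminated coordinates --- matches what the detailed argument needs.
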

\begin{proof}[Proof Idea]
	The statement follows with a straight forward induction over the elimination steps, where the original row corresponding to the chosen bound is added to $\mathcal{N}$, and $\mathcal{B}_{\mathcal{N}, F}$ keeps track of which constraint corresponds to which original row.
	Consult the
	\ifextended
	appendix
	\else
	extended version
	\fi
	for a detailed proof.
\end{proof}

We call the above defined set $\mathcal{N}$ the \emph{non-basis}, inspired from the analogies to the simplex algorithm (discussed in \Cref{sec:simplex}).
By the above theorem, the order in which a non-basis is constructed has no influence on the satisfiability of the induced sub-problem.
In particular:

\begin{theorem} \label{thm:redundancies}
	Let $\matr{a} \in \rationals^{m \times n}$, $\vect{b} \in \rationals^{m\times 1}$, $j \in [n]$, and let $i,i' \in [m]$ be row indices with $a_{i,j} \neq 0$ and $a_{i',j} \neq 0$.
	If $\fmpproj{\matr{a}\vect{x} \leq \vect{b}}{j}{i}$ is unsatisfiable, then $\fmpproj{\matr{a}\vect{x} \leq \vect{b}}{j}{i'} \wedge  (\rvect{a}{i}\;\vect{x} = b_i) $ is unsatisfiable.
\end{theorem}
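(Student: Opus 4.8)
The plan is to obtain this as a corollary of the correspondence in \Cref{thm:basis-nonbasis-correspondence} together with the trivial fact that conjoining further equalities to an unsatisfiable system keeps it unsatisfiable. Note first that $\fmpproj{\sys{a}{b}}{j}{i}$ is only defined when $\lbs{a}{j}\neq\emptyset$ and $\ubs{a}{j}\neq\emptyset$, so both $\fmpproj{\sys{a}{b}}{j}{i}$ and $\fmpproj{\sys{a}{b}}{j}{i'}$ are restricted projections of the first kind in \Cref{def:restricted-projection}, and, since $a_{i,j}\neq 0$ and $a_{i',j}\neq 0$, the rows $\rvect{a}{i}\vect{x}=b_i$ and $\rvect{a}{i'}\vect{x}=b_{i'}$ genuinely constrain $x_j$.

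The key ingredient is the single-step instance of \Cref{thm:basis-nonbasis-correspondence}: for every row $k$ with $a_{k,j}\neq 0$,
\[
	\fmpproj{\sys{a}{b}}{j}{k}\text{ is satisfiable}\quad\Longleftrightarrow\quad(\sys{a}{b})\wedge(\rvect{a}{k}\vect{x}=b_k)\text{ is satisfiable}.
\]
I would prove this from the argument in the proof of \Cref{lma:restricted-projection}: if $\alpha\models\fmpproj{\sys{a}{b}}{j}{k}$, then $\alpha[x_j\mapsto\alpha(\bnd{j}{\rvect{a}{k}\vect{x}\leq b_k})]$ satisfies $\sys{a}{b}$ and makes row $k$ tight; conversely, in a solution $\beta$ of $\sys{a}{b}$ with $\rvect{a}{k}\vect{x}=b_k$, the value $\beta(x_j)$ equals the bound on $x_j$ induced by row $k$, which in a solution of $\sys{a}{b}$ is at least every lower and at most every upper bound, so $\proj{\beta}{X\setminus\{x_j\}}\models\fmpproj{\sys{a}{b}}{j}{k}$. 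Iterating this — equivalently, invoking \Cref{thm:basis-nonbasis-correspondence} for the two-element non-basis $\{i',i\}$ and using that the order of designation is irrelevant — should give
\[
	\fmpproj{\sys{a}{b}}{j}{i'}\wedge(\rvect{a}{i}\vect{x}=b_i)\text{ satisfiable}\quad\Longleftrightarrow\quad(\sys{a}{b})\wedge(\rvect{a}{i'}\vect{x}=b_{i'})\wedge(\rvect{a}{i}\vect{x}=b_i)\text{ satisfiable}.
\]
Given both equivalences the claim is immediate: if $\fmpproj{\sys{a}{b}}{j}{i}$ is unsatisfiable, then $(\sys{a}{b})\wedge(\rvect{a}{i}\vect{x}=b_i)$ is unsatisfiable, hence so is its superset $(\sys{a}{b})\wedge(\rvect{a}{i'}\vect{x}=b_{i'})\wedge(\rvect{a}{i}\vect{x}=b_i)$, hence so is $\fmpproj{\sys{a}{b}}{j}{i'}\wedge(\rvect{a}{i}\vect{x}=b_i)$.

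The step I expect to be the main obstacle is the second equivalence. One has to verify that conjoining the \emph{raw} row $\rvect{a}{i}\vect{x}=b_i$ to the already-projected system $\fmpproj{\sys{a}{b}}{j}{i'}$ — in which $x_j$ no longer occurs in the inequalities — amounts to additionally placing row $i$ into the non-basis of $\fmpproj{\sys{a}{b}}{j}{i'}$, i.e.\ to requiring that in $\sys{a}{b}$ both rows $i$ and $i'$ are tight, and not merely that $x_j$ is pinned to the value of row $i$'s bound. Making this precise is exactly where the bound-ordering argument of \Cref{lma:restricted-projection} and the non-basis bookkeeping (the mapping $\mathcal{B}_{\mathcal{N},F}$) of \Cref{thm:basis-nonbasis-correspondence} have to be used in full, rather than just their statements.
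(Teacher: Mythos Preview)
Your plan is exactly the paper's argument: use \Cref{thm:basis-nonbasis-correspondence} once to pass from ``$\fmpproj{\sys{a}{b}}{j}{i}$ unsatisfiable'' to ``$(\sys{a}{b})\wedge(\rvect{a}{i}\vect{x}=b_i)$ unsatisfiable'', conjoin the further equality for row $i'$, and then invoke \Cref{thm:basis-nonbasis-correspondence} ``in the other direction'' to reach the conclusion. You are right to single out this last step as the obstacle.

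It is in fact a genuine gap, not a bookkeeping detail. Since $x_j$ does not occur in $\fmpproj{\sys{a}{b}}{j}{i'}$, conjoining the raw equality $\rvect{a}{i}\vect{x}=b_i$ merely pins $x_j$ and does \emph{not} force row $i'$ to be tight; hence satisfiability of $\fmpproj{\sys{a}{b}}{j}{i'}\wedge(\rvect{a}{i}\vect{x}=b_i)$ is equivalent to satisfiability of $\fmpproj{\sys{a}{b}}{j}{i'}$ alone. A one-variable instance already breaks the desired implication: with rows $x_1\le 1$, $-x_1\le 0$, $-x_1\le 1$ and $j=1$, $i=3$, $i'=2$, the projection $\fmpproj{\sys{a}{b}}{1}{3}$ contains $0\le -1$ and is unsatisfiable, yet $\fmpproj{\sys{a}{b}}{1}{2}$ reduces to trivially true constraints and $\fmpproj{\sys{a}{b}}{1}{2}\wedge(-x_1=1)$ is satisfied by $x_1=-1$. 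Your proposed remedy of invoking the two-element non-basis $\{i,i'\}$ cannot help either: both rows bound the \emph{same} variable $x_j$, so no sequence of restricted projections ever produces that non-basis (the one-variable example makes this unavoidable). What the paper actually relies on for \refalgobounds is the weaker fact that any sub-problem whose non-basis contains the original row corresponding to $i$ is unsatisfiable; that follows directly from item~1 of \Cref{thm:basis-nonbasis-correspondence} together with monotonicity of unsatisfiability under adding equalities, and does not require your second equivalence at all.
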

\begin{proof}
	By \Cref{thm:basis-nonbasis-correspondence}, if $\fmpproj{\matr{a}\vect{x} \leq \vect{b}}{j}{i}$ is unsatisfiable, then $(\matr{a}\vect{x} \leq \vect{b}) \wedge  (\rvect{a}{i}\;\vect{x} = \vect{b_i})$ is unsatisfiable, and trivially $(\matr{a}\;\vect{x} \leq \vect{b}) \wedge  (\rvect{a}{i}\;\vect{x} = \vect{b_i}) \wedge (\rvect{a}{i'}\;\vect{x} = \vect{b_{i'}})$ is unsatisfiable as well.
	Using \Cref{thm:basis-nonbasis-correspondence} in the other direction yields that $\fmpproj{\matr{a}\vect{x} \leq \vect{b}}{j}{i'} \wedge (\rvect{a}{i}\;\vect{x} = \vect{b_i} )$ is unsatisfiable.
\end{proof}

This suggests that if $\texttt{FMplex}(\sys{\widehat{a}}{\widehat{b}};\sys{a}{b},\matr{f})$ with non-basis $\mathcal{N}$ has a child call for row $i$ which does not return $\sat$, then no other call in the recursion tree of $\texttt{FMplex}(\sys{\widehat{a}}{\widehat{b}};\sys{a}{b},\matr{f})$ where the corresponding non-basis contains $\mathcal{B}_{\mathcal{N},F}(i)$ will return $\sat$ either.
Hence, we can ignore $\mathcal{B}_{\mathcal{N},F}(i)$ as designated bound in the remaining recursion tree of $\texttt{FMplex}(\sys{\widehat{a}}{\widehat{b}};\sys{a}{b},\matr{f})$.

\begin{example}
	Consider the system from \Cref{example:fmplex-idea}, with an additional constraint $c_5 : (-x_2 \leq 0)$.
	If $c_5$ is tried first as greatest lower bound on $x_2$, then the combination with $c_2 : (-2x_2 \leq -2)$ yields the local conflict $\frac{1}{2}c_2 - c_5 = (0 \leq -1)$.
	Thus, this branch and, due to \Cref{thm:redundancies}, any non-base containing row $5$ yields an unsatisfiable system.

	Next, we try $c_1$ as greatest lower bound on $x_2$ resulting in the combinations $\frac{1}{2}c_2 - c_1 = (x_1 \leq 3)$, $c_5 - c_1 = (x_1 \leq 4)$, $c_1 + c_3 = (-3x_1 \leq -3)$ and $c_1 + c_4 = (-x_1 \leq 1)$ and corresponding non-base $\{1\}$.

	If we now choose $(x_1 \leq 4)$ as smallest upper bound on $x_1$, leading to the non-base $\{1,5\}$, another local conflict occurs: $(x_1 \leq 3) - (x_1 \leq 4) = (0 \leq -1)$.
	As $5$ is contained in the non-base, we could know beforehand that this would happen and thus avoid computing this branch.
\end{example}

We update the $\texttt{FMplex}$ algorithm as shown in \refalgobounds using the following definition:

\begin{definition}
	The set of \emph{branch choices} for $\sys{a}{b}$ with $m$ rows w.r.t. $I \subseteq [m]$ is
	\begin{align*}
		\textit{branch\_choices}(\sys{a}{b}, I) =\quad & \ \{ \{ (x_j, i) \mid i \in \lbs{a}{j} \setminus I  \} \mid j \in [n]\wedge \lbs{a}{j} \neq \emptyset \wedge \ubs{a}{j} \neq \emptyset\} \\
		\cup  &\ \{ \{ (x_j, i) \mid i \in \ubs{a}{j} \setminus I \} \mid j \in [n]\wedge \lbs{a}{j} \neq \emptyset\wedge \ubs{a}{j} \neq \emptyset \} \\
		\cup &\ \{ \{ (x_j, \bot) \} \mid j \in [n]\wedge (\lbs{a}{j} = \emptyset \vee \ubs{a}{j} = \emptyset) \}.
	\end{align*}
\end{definition}

\noindent It is easy to see that this modification prevents visiting non-basis twice in the following sense:

\begin{theorem}
	Let $\texttt{FMplex}(\sys{\widehat{a}}{\widehat{b}};\matr{a}\vect{x} \leq \vect{b},\_,\mathcal{N},\_)$ and $\texttt{FMplex}(\sys{\widehat{a}}{\widehat{b}};\matr{a'}\vect{x} \leq \vect{b'},\_,\mathcal{N}',\_)$ be two calls in the recursion tree of a call to \refalgobounds\!\!.
	Then either $\mathcal{N} \neq \mathcal{N'}$ or one of the systems occurs in the subtree below the other and only unbounded variables are eliminated between them (i.e. one results from the other by deleting some rows).\qed
\end{theorem}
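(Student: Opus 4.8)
The plan is to control how the non-basis parameter $\mathcal{N}$ changes along edges of the recursion tree of \refalgobounds and to deduce that two calls carrying the same non-basis must be ancestor and descendant. Write $\mathcal{N}(N)$ for the value of $\mathcal{N}$ in a call $N$. I would first record the behaviour of $\mathcal{N}$ along the edge from a call $N$ (with accumulated matrix $\matr{f}$ and non-basis $\mathcal{N}$) to the child produced by $(x_j,i)\in V$: if $i=\bot$ — which happens only when $V=\{(x_j,\bot)\}$, i.e. $\lbs{a}{j}=\emptyset$ or $\ubs{a}{j}=\emptyset$ — then the child's non-basis is again $\mathcal{N}$, and by \Cref{def:restricted-projection} the child system equals $\sys{a}{b}$ with exactly the rows mentioning $x_j$ deleted; if $i\neq\bot$, the child's non-basis is $\mathcal{N}\cup\{\mathcal{B}_{\mathcal{N},\matr{f}}(i)\}$ and, by \Cref{thm:basis-nonbasis-correspondence}, $\mathcal{B}_{\mathcal{N},\matr{f}}(i)\in[\widehat{m}]\setminus\mathcal{N}$, so $|\mathcal{N}|$ strictly grows by one. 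Hence $\mathcal{N}$ is non-decreasing along every root-to-call path, and $|\mathcal{N}(N)|$ equals the number of branching ($i\neq\bot$) steps from the root to $N$; in particular $\mathcal{N}(N_1)=\mathcal{N}(N_2)$ forces $N_1$ and $N_2$ to have equally many branching steps from the root.

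I would then split on the lowest common ancestor $N^\ast$ of $N_1$ and $N_2$. If $N^\ast\in\{N_1,N_2\}$, say $N^\ast=N_1$, then $N_1$ is an ancestor of $N_2$; by monotonicity and $\mathcal{N}(N_1)=\mathcal{N}(N_2)$ there is no branching step on the path from $N_1$ to $N_2$, so by Step~1 every step on it is a $\bot$-step that merely deletes the rows mentioning the (one-sidedly unbounded) variable it eliminates. Composing these steps, the system of $N_2$ is the system of $N_1$ with some rows removed, and only unbounded variables were eliminated in between — the second alternative of the claim.

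The interesting case is $N^\ast\notin\{N_1,N_2\}$, where I would derive a contradiction. Then $N_1$ and $N_2$ lie in the subtrees of two distinct children of $N^\ast$, so $N^\ast$ actually invoked at least two children; hence its chosen $V$ has at least two elements, is of the form $\{(x_j,i)\mid i\in\lbs{a}{j}\setminus I\}$ or $\{(x_j,i)\mid i\in\ubs{a}{j}\setminus I\}$, and every $i\in V$ is $\neq\bot$. Let $C_1$ be whichever of these two children — the one on the way to $N_1$ or the one on the way to $N_2$ — that the \texttt{foreach} loop at $N^\ast$ processes first, reached via $(x_j,i_1)$, and $C_2$ the other, via $(x_j,i_2)$, with $i_1\neq i_2$; set $\hat{\imath}_1:=\mathcal{B}_{\mathcal{N},\matr{f}}(i_1)$ using the $\mathcal{N},\matr{f}$ of $N^\ast$. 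Since $N^\ast\to C_1$ is a branching step, $\hat{\imath}_1\in\mathcal{N}(C_1)$, hence by monotonicity $\hat{\imath}_1$ lies in the non-basis of whichever of $N_1,N_2$ sits below $C_1$. Conversely, once the call for $C_1$ returns, line~\ref{algo:backtracking:ignore} adds $\hat{\imath}_1$ to $I$ before $C_2$ is invoked; this $I$ is passed unchanged into $C_2$ and is only ever enlarged further down, so $\hat{\imath}_1\in I$ throughout the subtree below $C_2$. As $\textit{branch\_choices}(\cdot,I)$ never offers a row whose original index (translated via $\mathcal{B}^{-1}$) lies in $I$, no branching step below $C_2$ ever designates the original row $\hat{\imath}_1$; together with $\hat{\imath}_1\notin\mathcal{N}(N^\ast)$ and injectivity of $\mathcal{B}$ (so $\hat{\imath}_1\neq\mathcal{B}_{\mathcal{N},\matr{f}}(i_2)$), this shows $\hat{\imath}_1$ is \emph{not} in the non-basis of the call among $N_1,N_2$ that sits below $C_2$. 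Thus $\mathcal{N}(N_1)\neq\mathcal{N}(N_2)$, contradicting the assumption; so equal non-bases force one call to be an ancestor of the other, and Step~2 finishes the proof.

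The delicate part is this last case: it amounts to keeping the three interlocked bookkeeping objects — the non-basis $\mathcal{N}$, the ignore-set $I$, and the correspondence $\mathcal{B}$ — consistent across recursive calls, and arguing that the original row $\hat{\imath}_1$ designated at $C_1$ is simultaneously fresh there (so it enters $\mathcal{N}$) and permanently forbidden below $C_2$ (so it can never re-enter $\mathcal{N}$). Everything else reduces to the monotonicity of $\mathcal{N}$ and the observation that $\bot$-steps only delete rows.
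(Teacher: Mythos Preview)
Your argument is correct. The paper itself states this theorem without proof (it is introduced with ``it is easy to see'' and closed immediately with \qed), so you have supplied exactly the bookkeeping the authors leave implicit: monotone growth of $\mathcal{N}$ along root-to-node paths, the lowest-common-ancestor case split, and the use of line~\ref{algo:backtracking:ignore} together with the injectivity in \Cref{thm:basis-nonbasis-correspondence} to show that the original index $\hat{\imath}_1$ designated on the $C_1$-side is permanently excluded from becoming a designated bound anywhere below $C_2$. One cosmetic point: between $C_1$ and $C_2$ there may be further siblings, so the $I$ passed into $C_2$ need not be literally ``unchanged'' from the moment $\hat{\imath}_1$ was added---but since $I$ only grows this does not affect your conclusion that $\hat{\imath}_1\in I$ throughout the $C_2$-subtree.
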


\Cref{thm:unique-base-termination} states that, still, \refalgobounds always terminates with $\sat$ or a global conflict. This follows by a slight modification of the proof of \Cref{thm:global-conflict}, presented in the
\ifextended
appendix
\else
extended version
\fi
of this paper.

\begin{theorem}\label{thm:unique-base-termination}
	Let $\matr{\widehat{a}} \in \rationals^{\widehat{m} \times n}$, and $\vect{\widehat{b}} \in \rationals^{\widehat{m}\times 1}$.
	Then $\sys{\widehat{a}}{\widehat{b}}$ is unsatisfiable if and only if the call $\texttt{FMplex}(\sys{\widehat{a}}{\widehat{b}})$ to \refalgobounds terminates with a global conflict.\qed
\end{theorem}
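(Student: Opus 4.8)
The plan is to recycle the proof of \Cref{thm:global-conflict} almost verbatim, inserting one extra argument that handles the only new ingredient of \refalgobounds: the restriction of the branch choices by the ignore set $I$. The direction ``a global conflict is found $\Longrightarrow$ $\sys{\widehat{a}}{\widehat{b}}$ is unsatisfiable'' is literally the one from \Cref{thm:global-conflict} and does not interact with the pruning at all: a global conflict is a row $i$ of the current system with $\rvect{a}{i}=\vect 0^{(n)}$, $b_i<0$ and $\rvect{f}{i}\geq 0$, so $\rvect{f}{i}$ is a conical combination of the original rows with $\rvect{f}{i}\matr{\widehat{a}}=\vect 0$ and $\rvect{f}{i}\vect{\widehat{b}}=b_i<0$, and Farkas' Lemma (\Cref{thm:farkas}) yields unsatisfiability.

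For the converse I would recall the shape of the proof of \Cref{thm:global-conflict}. Fix an inclusion-minimal unsatisfiable subset $\widehat K\subseteq[\widehat m]$ of the constraints of $\sys{\widehat{a}}{\widehat{b}}$. Minimality makes the Farkas certificate of $\widehat K$ unique up to positive scaling, i.e. $\{\vect g\mid \vect g\matr{\widehat{a}}=\vect 0,\ \operatorname{supp}(\vect g)\subseteq\widehat K\}$ is a one-dimensional space spanned by a strictly positive $\vect g$ (otherwise one could zero out a coordinate and shrink $\widehat K$); consequently every row that occurs in the search tree whose underlying combination $\vect f$ over the original rows satisfies $\operatorname{supp}(\vect f)\subseteq\widehat K$, $\vect f\matr{\widehat{a}}=\vect 0$ and $\vect f\vect{\widehat{b}}<0$ is automatically a \emph{global} conflict. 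The proof of \Cref{thm:global-conflict} exhibits one such row via a ``$\widehat K$-path'': starting from the root it eliminates the variables of $\widehat K$ one by one, each time designating as largest lower (resp. smallest upper) bound a row whose \Cref{thm:basis-nonbasis-correspondence}-correspondent lies in $\widehat K$, and eliminating the remaining variables arbitrarily, so that the rows corresponding to $\widehat K$ stay conical combinations of $\widehat K$-rows only and the leaf carries a global conflict.

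In \refalgobounds the only change is that at each node \textbf{choose} may only designate bounds whose correspondent is not already in $I$, so it suffices to show that on an unsatisfiable input this never prevents reaching a global conflict. When run on such an input, \refalgobounds never returns $\sat$; suppose for contradiction it does not return a global conflict either. The $\widehat K$-path can be blocked only at a node where the eliminated variable lies in $\operatorname{vars}(\widehat K)$ but every $\widehat K$-bound on the side picked by \textbf{choose} has already entered $I$. Such a bound $\hat\imath$ entered $I$ when some earlier-completed branch finished; that branch has $\hat\imath$ (which lies in $\widehat K$) in its non-basis, the rest of its non-basis lies in $\widehat K$ (its parent sits on a prefix of a $\widehat K$-path), it was fully explored, and, since the run never reports a global conflict, it returned $\partialunsat$. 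Running the same reasoning from that branch produces a branch finishing strictly earlier in the (finite, totally ordered) search with the same property, so the descent terminates; at the bottom sits a fully-explored branch whose $\widehat K$-path is \emph{not} blocked, and then the construction of \Cref{thm:global-conflict} carried out from that branch's root reaches a global conflict, contradicting that none was reported. Hence \refalgobounds reports a global conflict; with \Cref{thm:farkas} this is the claimed equivalence, and in particular the initial call never returns $\partialunsat$.

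The delicate part is the bookkeeping in the previous paragraph: one has to fix a genuinely well-founded order for the descent (the search-completion order, perhaps combined lexicographically with the number of not-yet-removed $\widehat K$-variables), verify that the $\widehat K$-path construction of \Cref{thm:global-conflict} really does run from an arbitrary node whose non-basis is contained in $\widehat K$ — in particular that eliminating a variable outside $\operatorname{vars}(\widehat K)$ merely copies the $\widehat K$-rows and hence preserves the uniqueness-of-certificate argument — and check that every branch that populates $I$ indeed has its non-basis inside $\widehat K$ at the moment it is analysed. \Cref{thm:basis-nonbasis-correspondence} and the reasoning behind \Cref{thm:redundancies} supply these invariants; the steps are routine but fiddly, and they are precisely where the announced ``slight modification'' of the proof of \Cref{thm:global-conflict} resides.
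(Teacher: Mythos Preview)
Your proposal is essentially correct and follows the same route the paper indicates: the Farkas direction is unchanged, and for the converse you adapt the $\widehat K$-path construction from \Cref{thm:global-conflict}, adding a well-founded descent over earlier-completed branches to handle the one new obstacle, namely that a desired $\widehat K$-bound may already lie in $I$. The key ingredients you invoke---uniqueness of the Farkas certificate for a minimal unsatisfiable $\widehat K$, the fact that any branch which put a $\widehat K$-index into $I$ itself satisfies the $\widehat K$-invariant (its parent lies on the $\widehat K$-path and its designated bound is in $\widehat K$), and strict decrease in completion order---are exactly what the announced ``slight modification'' needs, and your closing paragraph correctly flags where the bookkeeping must be made precise.

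One small imprecision worth tightening: you write that such a branch has ``the rest of its non-basis in $\widehat K$'', but along the $\widehat K$-path the non-basis may pick up indices outside $\widehat K$ whenever a variable not occurring in the $\widehat K$-rows is eliminated. The invariant that actually propagates (and that you yourself state later) is that the \emph{$\widehat K$-rows} have support contained in $\widehat K$; the non-basis as a whole need not. This does not affect the argument---combining a $\widehat K$-row with a designated $\widehat K$-bound preserves that support condition---but the phrasing should be adjusted accordingly. With that fix, your sketch matches the paper's intended proof.
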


\subsection{Backtracking of Local Conflicts}

So far, we ignored local conflicts that witness the unsatisfiability of a given sub-problem.
In this section, we will cut off parts of the search tree based on local conflicts and examine the theoretical implications.

We applied Farkas' Lemma on conflicting rows in some sub-problem that are positive linear combinations of rows from the input system.
We can also apply Farkas' Lemma to conflicting rows which are positive linear combinations of some \emph{intermediate} system to conclude the unsatisfiability of the latter.
Whenever such a conflict occurs, we can backtrack to the parent system of that unsatisfiable system.
Instead of tracking the linear combinations of every row in terms of the rows of each preceding intermediate system, we can do an incomplete check: If a conflicting row was computed only by addition operations, then it is a positive linear combination of the involved rows.
Thus, we assign to every intermediate system a level, representing its depth in the search tree and store for every row the level where the last subtraction was applied to the row (i.e. a lower (upper) bound was subtracted from another lower (upper) bound).
If a row is conflicting, we can conclude that the intermediate system at this level is unsatisfiable, thus we can jump back to its parent.

Assume the current system is $\matr{a}\vect{x} \leq \vect{b}$ at level $\texttt{lvl}$ with $m$ rows whose backtracking levels are stored in $\texttt{bt\_lvl}: [m] \to ([n] \cup \{ 0 \})$.
If $\texttt{lvl}=0$, then $\texttt{bt\_lvl}$ maps all values to $0$.
When computing $\fmpproj{\matr{a}\vect{x} \leq \vect{b}}{j}{i}$ for some $x_j$ and $i$ with projection matrix $\matr{f}$, the backtracking levels of the rows in the resulting system $\matr{FA}\vect{x} \leq \matr{F}\vect{b}$ are stored in $\texttt{bt\_lvl}'$ where for each row $i''$
\[\texttt{bt\_lvl}'(i'') := \left.\begin{cases}
	\max\{\texttt{bt\_lvl}(i),\texttt{bt\_lvl}(i')\} & \text{ if } f_{i'',i}, f_{i'',i'} > 0 \text { and } f_{i'',k}=0,\  k \notin \{ i,i' \} \\
	\texttt{lvl} &  \text{ otherwise} .
\end{cases}\right.\]

The backtracking scheme is given in \refalgobacktracking\!, which returns additional information in the $\partialunsat$ case, that is the backtrack level $l$ of the given conflict, and a (possibly non-minimal) unsatisfiable subset $K$.

\begin{theorem}
	Let $\texttt{FMplex}(\_;\matr{a}\vect{x} \leq \vect{b},\_,\_,\_,\texttt{\upshape lvl},\_)$ be a call to \refalgobacktracking\!, and consider a second call $\texttt{FMplex}(\_;\matr{a'}\vect{x} \leq \vect{b'},\_,\_,\_,\_,\texttt{\upshape bt\_lvl}')$ in the recursion tree of the first call.
	If $\matr{a'}\vect{x} \leq \vect{b'}$ has a local conflict in a row $i$ with $\texttt{\upshape bt\_lvl}'(i)=\texttt{\upshape lvl}$, then $\matr{a}\vect{x} \leq \vect{b}$ is unsatisfiable.
\end{theorem}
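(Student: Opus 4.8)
The plan is to reduce the statement to Farkas' Lemma (\Cref{thm:farkas}) applied directly to $\matr{a}\vect{x} \leq \vect{b}$. The crucial intermediate fact is: if a system $\matr{a'}\vect{x} \leq \vect{b'}$ occurring in the recursion subtree of the call on $\matr{a}\vect{x} \leq \vect{b}$ has a row $i$ with $\texttt{bt\_lvl}'(i) = \texttt{lvl}$, then that row is a \emph{conical} combination of the rows of $\matr{a}\vect{x} \leq \vect{b}$, i.e.\ there is $\vect{g} \geq 0$ with $\rvect{a'}{i} = \vect{g}\matr{a}$ and $b'_i = \vect{g}\vect{b}$. Given this, the proof finishes at once: since row $i$ is a conflict, $\rvect{a'}{i} = \vect{0}^{(n)}$ and $b'_i < 0$, so $\vect{g}$ satisfies $\vect{g} \geq 0$, $\vect{g}\matr{a} = \vect{0}^{(n)}$ and $\vect{g}\vect{b} < 0$, which by \Cref{thm:farkas} is precisely a certificate that $\matr{a}\vect{x} \leq \vect{b}$ is unsatisfiable.

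To prove the intermediate fact I would argue by induction on the number of variable-elimination steps between the two calls, establishing the stronger invariant that in every system arising in the recursion subtree of the call on $\matr{a}\vect{x} \leq \vect{b}$, every row whose recorded backtrack level is at most $\texttt{lvl}$ is a conical combination of the rows of $\matr{a}\vect{x} \leq \vect{b}$. The base case (the call on $\matr{a}\vect{x} \leq \vect{b}$ itself) is trivial, as every row is a unit combination of itself. For the step, consider a restricted projection $\fmpproj{\matr{c}\vect{x} \leq \vect{d}}{j}{i_0}$ with projection matrix $\matr{f}$, where $\matr{c}\vect{x} \leq \vect{d}$ lies at a level $\geq \texttt{lvl}$ and satisfies the invariant. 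By the way $\texttt{bt\_lvl}'$ is defined, a row of the projection can have recorded level at most $\texttt{lvl}$ only if it falls into the $\max$-case: it is a combination of exactly two rows $p,q$ of $\matr{c}\vect{x} \leq \vect{d}$ with two strictly positive coefficients, and $\max\{\texttt{bt\_lvl}(p),\texttt{bt\_lvl}(q)\} \leq \texttt{lvl}$; every other row produced by \Cref{def:restricted-projection} — every lower/lower or upper/upper subtraction, and every copied $\nbs{c}{j}$ row — is pushed to a backtrack level strictly above the level of $\matr{c}\vect{x} \leq \vect{d}$, hence strictly above $\texttt{lvl}$, and is therefore out of scope. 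For a $\max$-case row we get $\texttt{bt\_lvl}(p),\texttt{bt\_lvl}(q) \leq \texttt{lvl}$, so by the induction hypothesis $p$ and $q$ are conical combinations of the rows of $\matr{a}\vect{x} \leq \vect{b}$, and then so is their strictly positive combination. Applying the invariant to $\matr{a'}\vect{x} \leq \vect{b'}$ and the conflict row $i$ yields the required $\vect{g}$.

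The ``\emph{local}'' qualifier is not used in the argument above; it appears only because a global conflict would already have been reported by \refalgobacktracking before any backtracking happens. (Such a conflict in any case certifies, via \Cref{thm:farkas}, that the original input $\sys{\widehat{a}}{\widehat{b}}$ is unsatisfiable, whence $\matr{a}\vect{x} \leq \vect{b}$ is too, since restricted projections preserve unsatisfiability downward by \Cref{lma:restricted-projection}.)

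The step I expect to be the real work is the bookkeeping in the inductive step: one must pin down exactly which rows produced by \Cref{def:restricted-projection} are conical combinations of two parent rows. A short sign analysis shows these are precisely the ``designated bound against an opposite bound'' rows $-\tfrac{1}{a_{i_0,j}}\rvect{e^{(m)}}{i_0} + \tfrac{1}{a_{u,j}}\rvect{e^{(m)}}{u}$, whose two nonzero coefficients are both strictly positive, whereas a same-type subtraction always carries a negative coefficient on the designated row; one then checks against the recursive definition of $\texttt{bt\_lvl}'$ that all remaining rows (these subtractions and the $\nbs{c}{j}$ copies) are indeed assigned a strictly larger backtrack level. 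With that dichotomy in hand, the induction and the closing appeal to Farkas' Lemma are routine.
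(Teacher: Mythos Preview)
Your argument is exactly the paper's: show that the conflicting row is a non-negative linear combination of the rows of $\matr{a}\vect{x}\le\vect{b}$ and then invoke Farkas' Lemma. The induction you give merely unpacks what the paper compresses into ``by construction of $\texttt{bt\_lvl}'$''.

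One detail deserves care. You assert that rows falling into the ``otherwise'' branch of the $\texttt{bt\_lvl}'$ definition are pushed to a backtrack level \emph{strictly above} the level of the system $\matr{c}\vect{x}\le\vect{d}$ being projected. As literally written, the paper assigns them the level \emph{of} $\matr{c}\vect{x}\le\vect{d}$, not one above it. Hence at the very first projection step out of the ancestor (where $\matr{c}\vect{x}\le\vect{d}$ is $\matr{a}\vect{x}\le\vect{b}$ itself, at level $\texttt{lvl}$), a same-type subtraction receives $\texttt{bt\_lvl}'=\texttt{lvl}$ while \emph{not} being a conical combination of the ancestor's rows, and your invariant fails right there. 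This is not a defect of your strategy---the paper's one-line proof leans on the same reading---and everything goes through once the ``otherwise'' value is read as $\texttt{lvl}{+}1$ (the level of the resulting system), which is evidently what is intended for the theorem to hold.
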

\begin{proof}
	By construction of $\texttt{bt\_lvl}$', $\rvect{a'}{i}\;\vect{x} \leq b'_i$ is a positive sum of rows from $\matr{a}\vect{x} \leq \vect{b}$, i.e. there exists an $\vect{f} \in \rationals^{1\times m}$ such that $(\vect{f} \matr{a} \vect{x} \leq  \vect{f} \vect{b}) = (\rvect{a'}{i}\;\vect{x} \leq b'_i)$.
	Then by Farkas' Lemma, $\matr{a}\vect{x} \leq \vect{b}$ is unsatisfiable.
\end{proof}

While it is complete and correct, \refalgobacktracking does not always terminate with a \emph{global} conflict (i.e. \Cref{thm:global-conflict} does not hold any more), even if we do not ignore any rows (i.e. omit Line~\ref{algo:backtracking:ignore}):

\vspace*{1em}

\begin{example}
	\opencutright
	\renewcommand*{\windowpagestuff}{
		\vspace*{-1em}
		\begin{align*}
			\left[ {\begin{array}{ccc}
				0 & 0 & -1 \\
				1 & -1 & -1 \\
				1 & 0 & 0 \\
				-1 & 1 & 0 \\
				0 & -1 & 1
			\end{array} } \right] 
			\cdot
			\left[ {\begin{array}{ccc}
				x_1 \\ x_2 \\ x_3
			\end{array} } \right] 
			\leq
			\left[ {\begin{array}{ccc}
				0 \\ 0 \\ -1 \\ -1 \\ 0
			\end{array} } \right] 
		\end{align*}
	}
	\begin{cutout}{0}{0.5\linewidth}{0pt}{7}
		We use \refalgobacktracking to eliminate variables with the static order $x_3, x_2, x_1$ from the system on the right, always branching on lower bounds.
		We first choose row $1$ as greatest lower bound on $x_3$.
		Rows $3$ and $4$ are retained as they do not contain $x_3$ and the combination of row $1$ with row $5$ is positive, so these constraints have backtrack level $0$.
	\end{cutout}
		\noindent The combination with row $2$ has backtrack level $1$ because both rows are lower bounds.
		Using this constraint as greatest lower bound on $x_2$ and combining it with row $4$ leads to a local conflict with backtrack level $1$.
		This means that the call at level $1$ is unsatisfiable and thus we backjump to level $0$.

		The second branch is visited, leading to the non-basis $\mathcal{N}=\{ 2, 5, 1 \}$ after three steps, where a \emph{local} conflict lets us backjump to level $0$ again.
		As there are no more lower bounds on $x_3$, the algorithm returns \unsat ~without finding a global conflict.
\end{example}

\section{Relation to Other Methods}
\label{sec:othermethods}

\subsection{Simplex Algorithm}
\label{sec:simplex}

The simplex method \cite{dantzig1998linear,lemke1954dual} is an algorithm for linear optimization over the reals and is able to solve \emph{linear programs}.
The \emph{general simplex} \cite{dutertre2006integrating} is an adaption for checking the satisfiability of systems of linear constraints.
We illustrate its idea for the weak case.

Remind that given a system $\sys{a}{b}$ with $m$ rows, by the fundamental theorem of linear programming (\Cref{thm:linearprogramming}), $\sys{a}{b}$ is satisfiable if and only if there exists some maximal subset $\mathcal{N} \subseteq [m]$ such that $\submatr{a}{\mathcal{N}}$ is linearly independent and $\sys{a}{b} \cup \submatr{a}{\mathcal{N}}\vect{x}=\subvect{b}{\mathcal{N}}$ is satisfiable - the latter can be checked algorithmically using Gaussian elimination, resulting in a system where each variable is replaced by bounds induced by the rows $\mathcal{N}$.
This system along with the information which element in $\mathcal{N}$ was used to eliminate which variable is called a \emph{tableau}.
The idea of the simplex method is to do a local search on the set $\mathcal{N}$ (called \emph{non-basis}), that is, we replace some $i \in \mathcal{N}$ (\emph{leaving variable}) by some $i' \in [m] \setminus \mathcal{N}$ (\emph{entering variable}) obtaining $\mathcal{N}' := \mathcal{N} \cup \{ i' \} \setminus \{ i \}$ such that $\submatr{a}{\mathcal{N}'}$ is still linearly independent.
The clou is that the tableau representing $(\sys{a}{b}) \wedge (\subsyseq{a}{b}{\mathcal{N}})$ can be efficiently transformed into $(\sys{a}{b}) \wedge (\subsyseq{a}{b}{\mathcal{N}'})$ (called \emph{pivot operation}), and progress of the local search can be achieved by the choice of $i$ and $i'$.
These local search steps are performed until a satisfying solution is found, or a conflict is found.
These conflicts are detected using Farkas' Lemma (\Cref{thm:farkas}), i.e. a row in the tableau induces a trivially false constraint and is a positive linear combination of some input rows.

As suggested by \Cref{thm:basis-nonbasis-correspondence}, there is a strong correspondence between a tableau of the simplex algorithm and the intermediate systems constructed in FMplex.
More precisely, if a non-basis of a simplex tableau is equal to the non-basis of a leaf system of \refalgobase\!, then the tableau is satisfiable if and only if the FMplex system is satisfiable.
In fact, we could use the same data structure to represent the algorithmic states.
Comparing the two algorithms structurally, FMplex explores the search space in a tree-like structure using backtracking, while simplex can jump between neighbouring leaves directly.

The idea for \refalgobounds that excludes visiting the same non-basis in fact arose from the analogies between the two methods.
Further, we observe a potential advantage of FMplex: Simplex has more non-bases reachable from a given initial state than the leaves of the search tree of FMplex, as FMplex needs only to explore all lower or all upper bounds of a variable while simplex does local improvements blindly.
Heuristically, simplex cuts off large parts of its search space and we expect it often visits fewer non-bases than FMplex - however, as the pruning done by FMplex is by construction of the algorithm, we believe that there might be combinatorially hard instances on which it is more efficient than simplex.

\subsection{Virtual Substitution Method}

\emph{Virtual substitution} \cite{loosApplyingLinearQuantifier1993,weispfenning1997quantifier} admits  quantifier elimination for real arithmetic formulas.
Here, we consider its application on existentially quantified conjunctions of linear constraints.

The underlying observation is that the satisfaction of a formula changes at the zeros of its constraints and is invariant between the zeros.
Thus, the idea is to collect all \emph{symbolic zeros} $\text{zeros}(\varphi)$ of all constraints in some input formula $\varphi$.
If all these constraints are weak, then a variable $x_j$ is eliminated by plugging every zero and an arbitrarily small value $-\infty$ into the formula, i.e. $\exists x_j.\; \varphi$ is equivalent to $\varphi[-\infty\vs x_j] \vee \bigvee_{\xi \in \text{zeros}(\varphi)} \varphi[\xi\vs x_j]$.
The formula $\varphi[t\vs x_j]$ encodes the semantics of substituting the term $t$ for $x_j$ into the formula $\varphi$ (which is a disjunction of conjunctions).
As we can pull existential quantifiers into disjunctions, we can iteratively eliminate multiple variables by handling each case separately.

The resulting algorithm for quantifier elimination is singly exponential; further optimizations (\cite{nipkowLinearQuantifierElimination2008a} even proposes to consider only lower or upper bounds for the test candidates) lead to a very similar procedure as the FMplex quantifier elimination: Substituting a test candidate into the formula is equivalent to computing the restricted projection w.r.t. a variable bound.
However, our presentation allows to exploit the correspondence with the FM method.

Virtual substitution can also be adapted for SMT solving \cite{vssmt} to a depth-first search similar to FMplex.
A conflict-driven search for virtual substitution on conjunctions of weak linear constraints has been introduced in \cite{korovin2014towards}, which tracks intermediate constraints as linear combinations of the input constraints similarly to FMplex. Their conflict analysis is a direct generalization of the global conflicts in FMplex and is thus slightly stronger than our notion of local conflicts. However, their method requires storing and maintaining a lemma database, while FMplex stores all the information for pruning the search tree locally. The approaches have strong similarities, although they originate from quite different methods. Further, our presentation shows the similarities to simplex, is easily adaptable for strict constraints, and naturally extensible to work incrementally.

\subsection{Sample-Based Methods}
There exist several depth-first search approaches, including McMillan et al. \cite{mcmillanGeneralizing2009}, Cotton \cite{cottonNatural2010} and Korovin et al. \cite{korovinConflict2009,korovinSolving2011}, which maintain and adapt a concrete (partial) variable assignment.
They share the advantage that combinations of constraints are only computed to guide the assignment away from an encountered conflict, thus avoiding many unnecessary combinations which FM would compute.

Similar to FMplex, these methods perform a search with branching, backtracking and learning from conflicting choices.
However, they branch on variable assignments, with infinitely many possible choices in each step.
Interestingly, the bounds learned from encountered conflicts implicitly partition the search space into a finite number of regions to be tried, similar to what FMplex does explicitly.
In fact, we deem it possible that \cite{korovinConflict2009} or \cite{korovinSolving2011} try and exclude assignments from exactly the same regions that FMplex would visit (even in the same order).
However, the sample-based perspective offers different possibilities for heuristic improvements than FMplex: choosing the next assigned value vs. choosing the next lower bound; deriving constant variable bounds vs. structural exploits using Farkas' Lemma; possibility of very quick solutions vs. more control and knowledge about the possible choices.

Moreover, these methods offer no straight-forward adaption for quantifier elimination, while FMplex does. However, \cite{mcmillanGeneralizing2009} and \cite{cottonNatural2010} can handle not only conjunctions, but any quantifier-free LRA formula in conjunctive normal form.

\section{Experimental Evaluation}
\label{sec:experiments}
We implemented several heuristic variants of the FMplex algorithm, as well as the generalized \emph{simplex} and the \emph{FM} methods as non-incremental DPLL(T) theory backends in our SMT-RAT solver \cite{corziliusSMTRAT2015} and compared their performance in the context of satisfiability checking.
Using the transformation given in \cite{nalbach2021extending} and case splitting as in \cite{barrett2006splitting}, we extended the method to also handle strict and not-equal-constraints.

The base version of FMplex (\refalgobase\!) was tested with two different heuristics for the choice of the eliminated variable and for the order in which the branches are checked.
These choices may strongly influence the size of the explored search tree; in the best case, the very first path leads to a satisfiable leaf or to a global conflict.

\vspace{-0.75em}\paragraph{Min-Fanout} We greedily minimize the number of children: for any $\sys{a}{b}$ and $I$, we choose $V \in \textit{branch\_choices}(\sys{a}{b},I)$ such that $\abs{V}$ is minimal; in case that this minimum is $1$, we prefer choices $V = \{(x_j,\bot)\}$ for a $j \in [n]$ over the other choices.
    
We prefer rows with a lower (earlier) backtrack level, motivated by finding a global conflict through trying positive linear combinations first.
Moreover, if backtracking is used then we expect this heuristic to allow for backtracking further back on average.

\vspace{-0.75em}\paragraph{Min-Column-Length} A state-of-the-art heuristic for simplex in the context of SMT solving is the \emph{minimum column length} \cite{soisimplex}: we choose the variables for leaving and entering the non-basis such that the number of necessary row operations is minimized.
We resemble this heuristic in FMplex as follows: we prefer choices $\{(x_j,\bot)\}$ and if there is no such $j$, we take the $j \in [n]$ with minimal $|\lbs{a}{j}|+|\ubs{a}{j}|$ and take the smallest choice between $\lbs{a}{j}$ and $\ubs{a}{j}$.

We first choose the rows which have the least non-zero coefficients (i.e. contain the least variables) to prefer sparse sub-problems.
This can be understood as \emph{Min-Row-Length}.

\vspace*{0.55em}
\noindent We consider the following solver variants: \texttt{FMplex-a-MFO} and \texttt{FMplex-a-MCL} implement \refalgobase with the Min-Fanout and the Min-Column-Length heuristic, respectively.
\texttt{FMplex-a-Rand-1/2} denotes two variants of \refalgobase where all choices are taken pseudo-randomly with different seeds.
\texttt{FMplex-b-MFO} implements \refalgobounds and \texttt{FMplex-c-MFO} implements \refalgobacktracking\!, both using the Min-Fanout heuristic.
Our approach is also compared to non-incremental implementations \texttt{FM} and \texttt{Simplex}.
The FMplex variants and \texttt{FM} always first employ Gaussian elimination to handle equalities.

All solvers were tested on the SMT-LIB \cite{BarFT-SMTLIB} benchmark set for QF\_LRA containing $1753$ formulas.
As all evaluated solvers are non-incremental, we also generated conjunctions of constraints by solving each of these QF\_LRA problems using a DPLL(T) SMT solver with an \texttt{FMplex-c-MFO} theory solver backend, and extracting all conjunctions passed to it.
If the solver terminated within the time and memory limits, we sampled 10 satisfiable and 10 unsatisfiable conjunctions (or gathered all produced conjunctions if there were fewer than 10).
This amounted to 3084 (777 sat, 2307 unsat) additional benchmarks.
The experiments were conducted on identical machines with two Intel Xeon Platinum 8160 CPUs (2.1 GHz, 24 cores).
For each formula, the time and memory were limited to 10 minutes and 5 GB.

\begin{table}[h]
    \center
    \scalebox{0.9}{
    \begin{tabular}{lrrrrr|rrrrr}%
            {} & \multicolumn{5}{c}{SMT-LIB} & \multicolumn{5}{c}{Conjunctions}\\
            {} &  solved &  sat &  unsat &  TO &  MO &  solved &  sat &  unsat &  TO &  MO\\
            \hline
            \texttt{Simplex}       &     958 &  527 &    431 &      714 &      81 &    3084 &  777 &   2307 &        0 &       0\\
            \texttt{FM}           &     860 &  461 &    399 &      577 &     316 &    2934 &  747 &   2187 &      107 &      43\\
            \texttt{FMplex-a-MFO}    &     814 &  432 &    382 &      840 &      99 &    2962 &  743 &   2219 &      122 &       0 \\
            \texttt{FMplex-a-MCL}    &     820 &  435 &    385 &      830 &     103 &    2965 &  742 &   2223 &      119 &       0 \\
            \texttt{FMplex-a-Rand-1} &     742 &  383 &    359 &      906 &     105 &    2806 &  668 &   2138 &      278 &       0 \\
            \texttt{FMplex-a-Rand-2} &     743 &  383 &    360 &      905 &     105 &    2823 &  671 &   2152 &      261 &       0 \\
            \texttt{FMplex-b-MFO} &     822 &  434 &    388 &      830 &     101 &    2988 &  744 &   2244 &       96 &       0 \\
            \texttt{FMplex-c-MFO} &     920 &  499 &    421 &      733 &     100 &    3084 &  777 &   2307 &        0 &       0 \\
            \texttt{Virtual-Best} &     982 &  532 &    450 &      651 &     120 &    3084 &  777 &   2307 &        0 &       0
        \end{tabular}
        }
        \caption{Number of solved instances, timeouts (TO) and memory-outs (MO).}\label{tbl:results}\vspace*{-0.5em}
    \end{table}

\begin{figure}[h!]
    \centering
    \begin{subfigure}{0.46\textwidth}
        \includegraphics{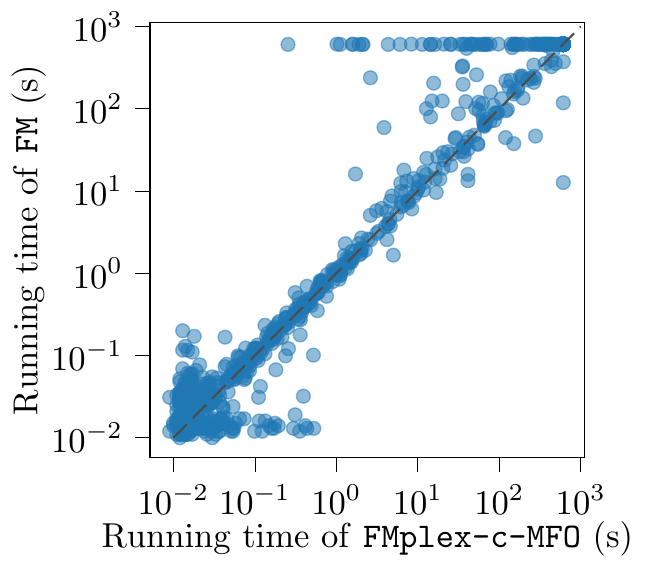}%
\caption{Running times in seconds on the SMT-LIB benchmark set, \texttt{FMplex} vs \texttt{FM}.}\label{fig:runtimes-a}
    \end{subfigure}%
    \hfill
    \begin{subfigure}{0.46\textwidth}
        \includegraphics{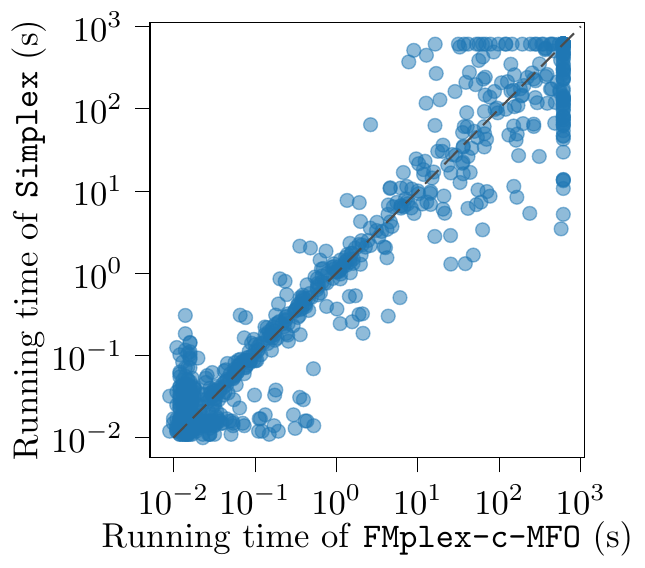}%
\caption{Running times in seconds on the SMT-LIB benchmark set, \texttt{FMplex} vs \texttt{Simplex}.}
        \label{fig:runtimes-b}
    \end{subfigure}\vspace*{4ex}
    \begin{subfigure}{0.46\textwidth}
        \includegraphics{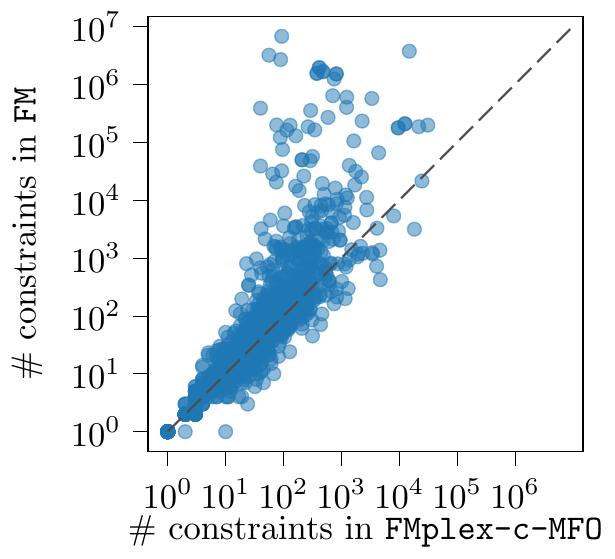}%
\caption{Number of generated constraints on the conjunctive benchmark set.}\label{fig:stats-conjunctions-a}
    \end{subfigure}%
    \hfill
    \begin{subfigure}{0.46\textwidth}
        \includegraphics{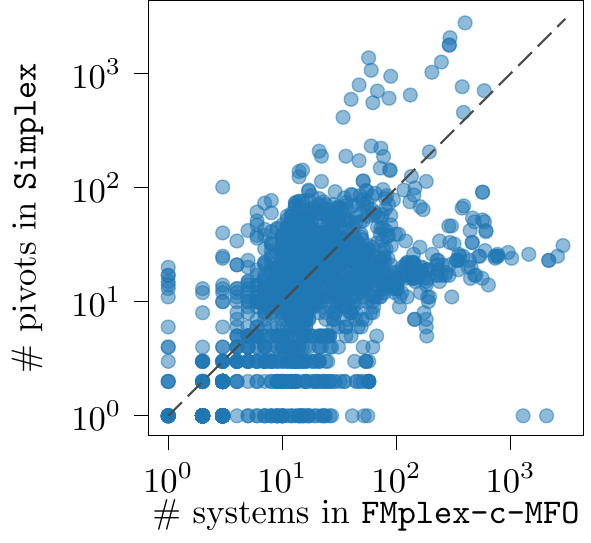}%

\caption{Number of visited non-bases (intermediate systems) on the conjunctive benchmark set.}\label{fig:stats-conjunctions-b}
    \end{subfigure}
\caption{%
Scatter plots: Each dot represents a single instance.
In (a) and (b), instances at the very top or right exceeded the resource limit.
Such instances are not considered in (c) and (d).%
}\label{fig:stats-conjunctions}

\vspace*{-2ex}
\end{figure}

The results in \Cref{tbl:results} show that \texttt{Simplex} solved the most SMT-LIB instances, followed by our \texttt{FMplex-c-MFO} and then \texttt{FM}.
Interestingly, \texttt{FM} solves fewer conjunctive instances than the base version of FMplex due to higher memory consumption ($43$ memory-outs for \texttt{FM}, while the others have none).
We see that a reasonable variable heuristic makes a difference as \texttt{FMplex-a-Rand-*} perform much worse than \texttt{FMplex-a-MFO} and \texttt{FMplex-a-MCL}.
However, between the latter two, there is no significant difference.
While our first optimization used in \texttt{FMplex-b-MFO} has no big impact, the backtracking implemented in \texttt{FMplex-c-MFO} allows for solving more instances within the given resource limits.

The running times for each individual SMT-LIB instance depicted in \Cref{fig:runtimes-a,fig:runtimes-b} reveal that \texttt{FM} and \texttt{FMplex-c-MFO} often behave similar, but \texttt{FM} fails on a number of larger instances.
We suspect that the smaller intermediate systems of FMplex are a main factor here.
While \texttt{Simplex} is often faster than \texttt{FMplex-c-MFO} and solves 61 SMT-LIB instances not solved by \texttt{FMplex-c-MFO}, it fails to solve 23 instances on which \texttt{FMplex-c-MFO} succeeds (Of these instances, \texttt{FM} solves 3 respectively 14 instances). Accordingly, the \texttt{Virtual-Best} of the tested solvers performs significantly better than just \texttt{Simplex}, indicating potential for a combination of \texttt{Simplex} and \texttt{FMplex-c-MFO}.

\Cref{fig:stats-conjunctions-a} compares the number of constraints generated by \texttt{FM} and \texttt{FMplex-c-MFO} on the conjunctive inputs.
Especially on larger instances, FMplex seems to be in the advantage.
Motivated by \Cref{subsec:avoiding-redundancies}, \Cref{fig:stats-conjunctions-b} compares the number of \texttt{Simplex} pivots  to the number of systems in \texttt{FMplex-c-MFO}.
We see that neither is consistently lower than the other, though \texttt{Simplex} seems to be slightly superior.
Due to the log-log scale, not shown are 1305 instances in which either measurement is 0 (920 instances for \texttt{Simplex}, 981 for \texttt{FMplex-c-MFO}).

\vspace*{0.25em}
\noindent The implementation and collected data are available at \url{https://doi.org/10.5281/zenodo.7755862}.

\section{Conclusion}
\label{sec:conclusion}

We introduced a novel method \emph{FMplex} for quantifier elimination and satisfiability checking for conjunctions of linear real arithmetic constraints.
Structural observations based on Farkas' Lemma and the Fundamental Theorem of Linear Programming allowed us to prune the elimination or the search tree.
Although the new method is rooted in the FM method, it has strong similarities with both the virtual substitution method and the simplex method.

The experimental results in the context of SMT solving show that FMplex is faster than Fourier-Motzkin and, although simplex is able to solve more instances than FMplex, there is a good amount of instances which can be solved by FMplex but cannot be solved by simplex.

In future work, we aim to combine the structural savings of FMplex with the efficient heuristic of simplex, i.e. we transfer ideas from FMplex to simplex and vice-versa.
Furthermore, we will investigate in tweaks and heuristics.
For instance, we plan to adapt the perfect elimination ordering from \cite{li2021choosing} and work on an incremental adaption for SMT solving.
Last but not least, we plan to increase the applicability of FMplex as a quantifier elimination procedure, including a different handling of strict inequalities, which is more similar to FM.

\bibliographystyle{eptcs}
\bibliography{literature.bib}

\begin{thebibliography}{10}
\providecommand{\bibitemdeclare}[2]{}
\providecommand{\surnamestart}{}
\providecommand{\surnameend}{}
\providecommand{\urlprefix}{Available at }
\providecommand{\url}[1]{\texttt{#1}}
\providecommand{\href}[2]{\texttt{#2}}
\providecommand{\urlalt}[2]{\href{#1}{#2}}
\providecommand{\doi}[1]{doi:\urlalt{https://doi.org/#1}{#1}}
\providecommand{\eprint}[1]{arXiv:\urlalt{https://arxiv.org/abs/#1}{#1}}
\providecommand{\bibinfo}[2]{#2}

\bibitemdeclare{misc}{BarFT-SMTLIB}
\bibitem{BarFT-SMTLIB}
\bibinfo{author}{Clark \surnamestart Barrett\surnameend},
  \bibinfo{author}{Pascal \surnamestart Fontaine\surnameend} \&
  \bibinfo{author}{Cesare \surnamestart Tinelli\surnameend}
  (\bibinfo{year}{2016}): \emph{\bibinfo{title}{The Satisfiability Modulo
  Theories Library ({SMT-LIB})}}.
\newblock \bibinfo{howpublished}{{\tt www.SMT-LIB.org}}.

\bibitemdeclare{inproceedings}{barrett2006splitting}
\bibitem{barrett2006splitting}
\bibinfo{author}{Clark \surnamestart Barrett\surnameend},
  \bibinfo{author}{Robert \surnamestart Nieuwenhuis\surnameend},
  \bibinfo{author}{Albert \surnamestart Oliveras\surnameend} \&
  \bibinfo{author}{Cesare \surnamestart Tinelli\surnameend}
  (\bibinfo{year}{2006}): \emph{\bibinfo{title}{Splitting on Demand in {SAT}
  Modulo Theories}}.
\newblock In: {\slshape \bibinfo{booktitle}{International Conference on Logic
  for Programming Artificial Intelligence and Reasoning (LPAR'06)}},
  \bibinfo{publisher}{Springer}, pp. \bibinfo{pages}{512--526},
  \doi{10.1007/11916277_35}.

\bibitemdeclare{inproceedings}{vssmt}
\bibitem{vssmt}
\bibinfo{author}{Florian \surnamestart Corzilius\surnameend} \&
  \bibinfo{author}{Erika \surnamestart {\'A}brah{\'a}m\surnameend}
  (\bibinfo{year}{2011}): \emph{\bibinfo{title}{Virtual Substitution for
  {SMT}-Solving}}.
\newblock In: {\slshape \bibinfo{booktitle}{International Symposium on
  Fundamentals of Computation Theory (FCT'11)}}, \bibinfo{publisher}{Springer},
  pp. \bibinfo{pages}{360--371}, \doi{10.1007/978-3-642-22953-4_31}.

\bibitemdeclare{inproceedings}{corziliusSMTRAT2015}
\bibitem{corziliusSMTRAT2015}
\bibinfo{author}{Florian \surnamestart Corzilius\surnameend},
  \bibinfo{author}{Gereon \surnamestart Kremer\surnameend},
  \bibinfo{author}{Sebastian \surnamestart Junges\surnameend},
  \bibinfo{author}{Stefan \surnamestart Schupp\surnameend} \&
  \bibinfo{author}{Erika \surnamestart {\'A}brah{\'a}m\surnameend}
  (\bibinfo{year}{2015}): \emph{\bibinfo{title}{{SMT-RAT}: {A}n Open Source
  {C}++ Toolbox for Strategic and Parallel {SMT} Solving}}.
\newblock In: {\slshape \bibinfo{booktitle}{International Conference on Theory
  and Applications of Satisfiability Testing (SAT'15)}},
  \bibinfo{publisher}{Springer}, pp. \bibinfo{pages}{360--368},
  \doi{10.1007/978-3-319-24318-4_26}.

\bibitemdeclare{inproceedings}{cottonNatural2010}
\bibitem{cottonNatural2010}
\bibinfo{author}{Scott \surnamestart Cotton\surnameend} (\bibinfo{year}{2010}):
  \emph{\bibinfo{title}{Natural {Domain} {SMT}: {A} {Preliminary}
  {Assessment}}}.
\newblock In: {\slshape \bibinfo{booktitle}{International Conference on Formal
  {Modeling} and {Analysis} of {Timed} {Systems} ({FORMATS'10})}},
  \bibinfo{publisher}{Springer}, pp. \bibinfo{pages}{77--91},
  \doi{10.1007/978-3-642-15297-9_8}.

\bibitemdeclare{book}{dantzig1998linear}
\bibitem{dantzig1998linear}
\bibinfo{author}{George~Bernard \surnamestart Dantzig\surnameend}
  (\bibinfo{year}{1998}): \emph{\bibinfo{title}{Linear Programming and
  Extensions}}.
\newblock \bibinfo{volume}{48}, \bibinfo{publisher}{Princeton University
  Press}, \doi{10.1515/9781400884179}.

\bibitemdeclare{article}{dutertre2006integrating}
\bibitem{dutertre2006integrating}
\bibinfo{author}{Bruno \surnamestart Dutertre\surnameend} \&
  \bibinfo{author}{Leonardo \surnamestart De~Moura\surnameend}
  (\bibinfo{year}{2006}): \emph{\bibinfo{title}{Integrating Simplex with
  {DPLL(T)}}}.
\newblock {\slshape \bibinfo{journal}{Computer Science Laboratory, SRI
  International, Tech. Rep. SRI-CSL-06-01}}.

\bibitemdeclare{article}{farkas1902theorie}
\bibitem{farkas1902theorie}
\bibinfo{author}{Julius \surnamestart Farkas\surnameend}
  (\bibinfo{year}{1902}): \emph{\bibinfo{title}{Theorie der einfachen
  {U}ngleichungen}}.
\newblock {\slshape \bibinfo{journal}{Journal f{\"u}r die reine und angewandte
  Mathematik (Crelles Journal)}} \bibinfo{volume}{1902}(\bibinfo{number}{124}),
  pp. \bibinfo{pages}{1--27}, \doi{10.1515/crll.1902.124.1}.

\bibitemdeclare{article}{fourier1827analyse}
\bibitem{fourier1827analyse}
\bibinfo{author}{Jean Baptiste~Joseph \surnamestart Fourier\surnameend}
  (\bibinfo{year}{1827}): \emph{\bibinfo{title}{Analyse des travaux de
  l’Acad{\'e}mie Royale des Sciences pendant l’ann{\'e}e 1824}}.
\newblock {\slshape \bibinfo{journal}{Partie math{\'e}matique}}.

\bibitemdeclare{incollection}{imbert1990redundant}
\bibitem{imbert1990redundant}
\bibinfo{author}{Jean-Louis \surnamestart Imbert\surnameend}
  (\bibinfo{year}{1990}): \emph{\bibinfo{title}{About Redundant Inequalities
  Generated by {F}ourier's Algorithm}}.
\newblock In: {\slshape \bibinfo{booktitle}{International Conference on
  Artificial Intelligence: Methodology, Systems, Applications (AIMSA'90)}},
  \bibinfo{publisher}{Elsevier}, pp. \bibinfo{pages}{117--127},
  \doi{10.1016/B978-0-444-88771-9.50019-2}.

\bibitemdeclare{inproceedings}{imbert1993fourier}
\bibitem{imbert1993fourier}
\bibinfo{author}{Jean-Louis \surnamestart Imbert\surnameend}
  (\bibinfo{year}{1993}): \emph{\bibinfo{title}{Fourier's Elimination: {W}hich
  to Choose?}}
\newblock In: {\slshape \bibinfo{booktitle}{International Conference on
  Principles and Practice of Constraint Programming (PPCP'93)}},
  \bibinfo{volume}{1}, \bibinfo{publisher}{Citeseer}, pp.
  \bibinfo{pages}{117--129}.

\bibitemdeclare{inproceedings}{JingComplexityFME}
\bibitem{JingComplexityFME}
\bibinfo{author}{Rui-Juan \surnamestart Jing\surnameend}, \bibinfo{author}{Marc
  \surnamestart Moreno-Maza\surnameend} \& \bibinfo{author}{Delaram
  \surnamestart Talaashrafi\surnameend} (\bibinfo{year}{2020}):
  \emph{\bibinfo{title}{Complexity Estimates for {F}ourier-{M}otzkin
  Elimination}}.
\newblock In: {\slshape \bibinfo{booktitle}{Computer Algebra in Scientific
  Computing (CASC'20)}}, \bibinfo{publisher}{Springer}, pp.
  \bibinfo{pages}{282--306}, \doi{10.1007/978-3-030-60026-6_16}.

\bibitemdeclare{article}{KHACHIYAN198053}
\bibitem{KHACHIYAN198053}
\bibinfo{author}{Leonid~Genrikhovich \surnamestart Khachiyan\surnameend}
  (\bibinfo{year}{1980}): \emph{\bibinfo{title}{Polynomial Algorithms in Linear
  Programming}}.
\newblock {\slshape \bibinfo{journal}{{USSR} Computational Mathematics and
  Mathematical Physics}} \bibinfo{volume}{20}(\bibinfo{number}{1}), pp.
  \bibinfo{pages}{53--72}, \doi{10.1016/0041-5553(80)90061-0}.

\bibitemdeclare{inproceedings}{soisimplex}
\bibitem{soisimplex}
\bibinfo{author}{Tim \surnamestart King\surnameend}, \bibinfo{author}{Clark
  \surnamestart Barrett\surnameend} \& \bibinfo{author}{Bruno \surnamestart
  Dutertre\surnameend} (\bibinfo{year}{2013}): \emph{\bibinfo{title}{Simplex
  with Sum of Infeasibilities for {SMT}}}.
\newblock In: {\slshape \bibinfo{booktitle}{Formal Methods in Computer-Aided
  Design (FMCAD'13)}}, pp. \bibinfo{pages}{189--196},
  \doi{10.1109/FMCAD.2013.6679409}.

\bibitemdeclare{inproceedings}{korovin2014towards}
\bibitem{korovin2014towards}
\bibinfo{author}{Konstantin \surnamestart Korovin\surnameend},
  \bibinfo{author}{Marek \surnamestart Kosta\surnameend} \&
  \bibinfo{author}{Thomas \surnamestart Sturm\surnameend}
  (\bibinfo{year}{2014}): \emph{\bibinfo{title}{Towards Conflict-driven
  Learning for Virtual Substitution}}.
\newblock In: {\slshape \bibinfo{booktitle}{International Workshop on Computer
  Algebra in Scientific Computing (CASC'14)}}, \bibinfo{publisher}{Springer},
  pp. \bibinfo{pages}{256--270}, \doi{10.1007/978-3-319-10515-4_19}.

\bibitemdeclare{inproceedings}{korovinConflict2009}
\bibitem{korovinConflict2009}
\bibinfo{author}{Konstantin \surnamestart Korovin\surnameend},
  \bibinfo{author}{Nestan \surnamestart Tsiskaridze\surnameend} \&
  \bibinfo{author}{Andrei \surnamestart Voronkov\surnameend}
  (\bibinfo{year}{2009}): \emph{\bibinfo{title}{Conflict {Resolution}}}.
\newblock In: {\slshape \bibinfo{booktitle}{Principles and {Practice} of
  {Constraint} {Programming} (CP'09)}}, \bibinfo{publisher}{Springer}, pp.
  \bibinfo{pages}{509--523}, \doi{10.1007/978-3-642-04244-7_41}.

\bibitemdeclare{inproceedings}{korovinSolving2011}
\bibitem{korovinSolving2011}
\bibinfo{author}{Konstantin \surnamestart Korovin\surnameend} \&
  \bibinfo{author}{Andrei \surnamestart Voronkov\surnameend}
  (\bibinfo{year}{2011}): \emph{\bibinfo{title}{Solving {Systems} of {Linear}
  {Inequalities} by {Bound} {Propagation}}}.
\newblock In: {\slshape \bibinfo{booktitle}{Conference on Automated {Deduction}
  ({CADE}'23)}}, \bibinfo{publisher}{Springer}, pp. \bibinfo{pages}{369--383},
  \doi{10.1007/978-3-642-22438-6_28}.

\bibitemdeclare{article}{lemke1954dual}
\bibitem{lemke1954dual}
\bibinfo{author}{Carlton~E. \surnamestart Lemke\surnameend}
  (\bibinfo{year}{1954}): \emph{\bibinfo{title}{The Dual Method of Solving the
  Linear Programming Problem}}.
\newblock {\slshape \bibinfo{journal}{Naval Research Logistics Quarterly}}
  \bibinfo{volume}{1}(\bibinfo{number}{1}), pp. \bibinfo{pages}{36--47},
  \doi{10.1002/nav.3800010107}.

\bibitemdeclare{inproceedings}{li2021choosing}
\bibitem{li2021choosing}
\bibinfo{author}{Haokun \surnamestart Li\surnameend}, \bibinfo{author}{Bican
  \surnamestart Xia\surnameend}, \bibinfo{author}{Huiying \surnamestart
  Zhang\surnameend} \& \bibinfo{author}{Tao \surnamestart Zheng\surnameend}
  (\bibinfo{year}{2021}): \emph{\bibinfo{title}{Choosing the Variable Ordering
  for Cylindrical Algebraic Decomposition via Exploiting Chordal Structure}}.
\newblock In: {\slshape \bibinfo{booktitle}{International Symposium on Symbolic
  and Algebraic Computation (ISSAC'21)}}, \bibinfo{publisher}{ACM}, pp.
  \bibinfo{pages}{281--288}, \doi{10.1145/3452143.3465520}.

\bibitemdeclare{article}{loosApplyingLinearQuantifier1993}
\bibitem{loosApplyingLinearQuantifier1993}
\bibinfo{author}{R{\"u}diger \surnamestart Loos\surnameend} \&
  \bibinfo{author}{Volker \surnamestart Weispfenning\surnameend}
  (\bibinfo{year}{1993}): \emph{\bibinfo{title}{Applying Linear Quantifier
  Elimination}}.
\newblock {\slshape \bibinfo{journal}{The Computer Journal}}
  \bibinfo{volume}{36}(\bibinfo{number}{5}), pp. \bibinfo{pages}{450--462},
  \doi{10.1093/comjnl/36.5.450}.

\bibitemdeclare{book}{luenberger1984linear}
\bibitem{luenberger1984linear}
\bibinfo{author}{David~G \surnamestart Luenberger\surnameend},
  \bibinfo{author}{Yinyu \surnamestart Ye\surnameend} et~al.
  (\bibinfo{year}{1984}): \emph{\bibinfo{title}{Linear and Nonlinear
  Programming}}.
\newblock \bibinfo{volume}{2nd edition}, \bibinfo{publisher}{Springer},
  \doi{10.1007/978-3-030-85450-8}.

\bibitemdeclare{inproceedings}{mcmillanGeneralizing2009}
\bibitem{mcmillanGeneralizing2009}
\bibinfo{author}{Kenneth~L. \surnamestart McMillan\surnameend},
  \bibinfo{author}{Andreas \surnamestart Kuehlmann\surnameend} \&
  \bibinfo{author}{Mooly \surnamestart Sagiv\surnameend}
  (\bibinfo{year}{2009}): \emph{\bibinfo{title}{Generalizing {DPLL} to {Richer}
  {Logics}}}.
\newblock In: {\slshape \bibinfo{booktitle}{International Conference on
  Computer {Aided} {Verification} {(CAV'09)}}}, \bibinfo{publisher}{Springer},
  pp. \bibinfo{pages}{462--476}, \doi{10.1007/978-3-642-02658-4_35}.

\bibitemdeclare{book}{motzkin1936beitrage}
\bibitem{motzkin1936beitrage}
\bibinfo{author}{Theodore~Samuel \surnamestart Motzkin\surnameend}
  (\bibinfo{year}{1936}): \emph{\bibinfo{title}{Beitr{\"a}ge zur {T}heorie der
  linearen {U}ngleichungen}}.
\newblock \bibinfo{publisher}{Azriel}.

\bibitemdeclare{inproceedings}{nalbach2021extending}
\bibitem{nalbach2021extending}
\bibinfo{author}{Jasper \surnamestart Nalbach\surnameend},
  \bibinfo{author}{Erika \surnamestart {\'A}brah{\'a}m\surnameend} \&
  \bibinfo{author}{Gereon \surnamestart Kremer\surnameend}
  (\bibinfo{year}{2021}): \emph{\bibinfo{title}{Extending the Fundamental
  Theorem of Linear Programming for Strict Inequalities}}.
\newblock In: {\slshape \bibinfo{booktitle}{International Symposium on Symbolic
  and Algebraic Computation (ISSAC'21)}}, \bibinfo{publisher}{ACM}, pp.
  \bibinfo{pages}{313--320}, \doi{10.1145/3452143.3465538}.

\bibitemdeclare{misc}{nalbach2023fmplex}
\bibitem{nalbach2023fmplex}
\bibinfo{author}{Jasper \surnamestart Nalbach\surnameend},
  \bibinfo{author}{Valentin \surnamestart Promies\surnameend},
  \bibinfo{author}{Erika \surnamestart Ábrahám\surnameend} \&
  \bibinfo{author}{Paul \surnamestart Kobialka\surnameend}
  (\bibinfo{year}{2023}): \emph{\bibinfo{title}{FMplex: A Novel Method for
  Solving Linear Real Arithmetic Problems}}.
\newblock \eprint{2309.03138}.

\bibitemdeclare{inproceedings}{nipkowLinearQuantifierElimination2008a}
\bibitem{nipkowLinearQuantifierElimination2008a}
\bibinfo{author}{Tobias \surnamestart Nipkow\surnameend}
  (\bibinfo{year}{2008}): \emph{\bibinfo{title}{Linear Quantifier
  Elimination}}.
\newblock In: {\slshape \bibinfo{booktitle}{Internation Joint Conference on
  Automated Reasoning (IJCAR'08)}}, \bibinfo{publisher}{Springer}, pp.
  \bibinfo{pages}{18--33}, \doi{10.1007/978-3-540-71070-7_3}.

\bibitemdeclare{article}{weispfenning1997quantifier}
\bibitem{weispfenning1997quantifier}
\bibinfo{author}{Volker \surnamestart Weispfenning\surnameend}
  (\bibinfo{year}{1997}): \emph{\bibinfo{title}{Quantifier Elimination for Real
  Algebra—the Quadratic Case and Beyond}}.
\newblock {\slshape \bibinfo{journal}{Applicable Algebra in Engineering,
  Communication and Computing}} \bibinfo{volume}{8}(\bibinfo{number}{2}), pp.
  \bibinfo{pages}{85--101}, \doi{10.1007/s002000050055}.

\end{thebibliography}

\ifextended

\appendix
\section{Proofs}
\input{appendix.tex}

\fi

\end{document}